\newtheorem{observation}{Observation}
\newcommand{\ShoLong}[2]{#2} %New command to have short and long versions coexist. Change the "#1" to "#2" for a long version, and vice versa
\newcommand{\rodrigo}[2][says]{** \textsc{rodrigo #1:} \textcolor{blue}{\textsl{#2}}**}
\newcommand{\matias}[2][says]{** \textsc{matias #1:} \textcolor{blue}{\textsl{#2}}**}
\newcommand{\new}[1]{#1}
\title{Stabbing segments with rectilinear objects\thanks{A preliminary version of this article appeared in \emph{Proc. 20th International Symposium on Fundamentals of Computation Theory}~\cite{fct}.}}
\author{
Merc\`e Claverol\inst{1}%\thanks{}
\and
Delia Garijo\inst{2}%\thanks{Depto. de Matem\'atica Aplicada I, Universidad de Sevilla, Spain, {\tt dgarijo@us.es}.}
\and
Matias Korman\inst{3}%\thanks{Dept. de Matem\`atica Aplicada II, Universitat Polit\`ecnica de Catalunya, Spain, {\tt \{matias.korman, carlos.seara, rodrigo.silveira\}@upc.edu}.}
\and
Carlos Seara\inst{1}%\footnotemark[3]
\and \\
Rodrigo I. Silveira\inst{1}\thanks{Corresponding author.}
}
\institute{Universitat Polit\`ecnica de Catalunya, Spain. \\{\tt \{merce.claverol,carlos.seara,rodrigo.silveira\}@upc.edu}.
\and
Universidad de Sevilla, Spain. {\tt dgarijo@us.es}. \and
Tohoku University, Japan. {\tt mati@dais.is.tohoku.ac.jp}.
}
\let\doendproof\endproof
\renewcommand\endproof{~\hfill\qed\doendproof}
\begin{document}

\graphicspath{ {figures/} }

\maketitle

\begin{abstract}
Given a set $S$ of $n$ line segments in the plane, we say that a region
$\mathcal{R}\subseteq \mathbb{R}^2$ is a {\em stabber} for $S$ if $\mathcal{R}$ contains
exactly one endpoint of each segment of $S$. In this paper we
provide optimal or near-optimal algorithms for reporting all combinatorially different
stabbers for several shapes of stabbers. Specifically, we
consider the case in which the stabber can be described as the
intersection of axis-parallel halfplanes (thus the stabbers are
halfplanes, strips, quadrants, $3$-sided rectangles, or rectangles).
The running times are $O(n)$
(for the halfplane case), $O(n\log n)$  (for strips, quadrants, and
3-sided rectangles), and $O(n^2 \log n)$ (for rectangles).
%All of our results also extend to halfplanes with fixed orientations.
\end{abstract}

\section{Introduction}

Stabbing or transversal problems have been widely investigated  in computational geometry and related areas.
 The general idea is to find a region with certain characteristics that intersects (or \emph{stabs}) a collection of geometric objects in a particular way.
This family of problems finds applications in multiple areas, such as 
%curve reconstruction, 
automatic map generation~\cite{mg-06}, 
line simplification~\cite{ghms-apsmlp-93},
regression analysis~\cite{adkmpsy-ct-14,aegmps-dlcdc-11},
and even in bioinformatics, concretely in simplification of molecule chains for visualization, matching and efficient searching in molecule and protein databases \cite{DL}. 

A substantial amount of research in this area has focused on studying  the problem of stabbing a collection of line segments.
%The problem of stabbing a collection of line segments has been widely investigated and arises in many diverse problems in computational geometry.
Several different criteria have been used to define what it means for a region to stab a set of segments:
(i) the region must contain exactly one endpoint of each segment,
(ii) the region must contain at least one endpoint of each segment, or
(iii) the region must intersect all segments (but no
restriction on the endpoints is given).

%aaa
%Most previous work on stabbers focuses on criteria (ii) or (iii).
%Since in these cases it is usually easy to find some stabbing region (there are always trivial solutions), the focus is on optimizing some measure of the stabbing region, such as its perimeter or area.

Most previous work on stabbers focuses on criteria (ii) or (iii), but in this paper we deal with criterion (i): we say that a region $\mathcal{R}\subseteq \mathbb{R}^2$ \emph{stabs} a set of segments $S$ if $\mathcal{R}$ contains exactly one endpoint of each segment of $S$; see Figure~\ref{no-stab-rectangle}(a). Concretely, we study the problem of computing all the combinatorially different stabbers of a given set of segments for some  types of stabbing regions. Two stabbing regions are considered \emph{combinatorially different} if and only if the sets of segment endpoints contained in each of them are different, otherwise they are said to be \emph{combinatorially equivalent}. Note that under our definition of stabbing region, the segments can be seen as pair of points (the interior of the segment does not play any role); however, as we will see afterwards, it will be convenient to keep referring to them as segments.

%In this work we focus on criterion (i): the stabber must contain exactly one endpoint of each segment.
%Moreover, we consider stabbers that are defined as the intersection of one, two, three, or four axis-parallel halfplanes (or halfplanes with fixed directions).
%Thus, we say that a region $\mathcal{R}\subseteq \mathbb{R}^2$ is a
%\emph{stabber} for a collection of line segments $S$ if $\mathcal{R}$
%contains exactly one endpoint of each segment of the set.
%Note that under this definition, the segments can be seen as pair of points (the interior of the segment does not play any role); however, as we will see afterwards it will be convenient to keep referring to them as segments.
%An important difference between this criterion and (ii)-(iii), is that here the existence of stabbers is not always guaranteed, making already the decision question interesting and non-trivial (see
%Fig.~\ref{no-stab-rectangle}).

Our interest in criterion (i) mainly comes from its differences with respect to the other two criteria. The existence of stabbers in our setting is not always guaranteed, whereas for criteria (ii)-(iii) it is usually easy to find some stabbing region (there are always trivial solutions); this makes our  decision question interesting and non-trivial (see Fig.~\ref{no-stab-rectangle}). Also, the study of stabbers that require regions to contain exactly one endpoint of each segment fits into the general framework of \emph{classification} or \emph{separability problems} because the stabbers classify endpoints (the ones inside the region and those outside). This is an interesting connection because classification problems arise in many diverse applications, see for instance~\cite{KK}.

\subsection{Previous work}
We begin discussing previous work on the stabbing model studied in this paper, i.e., criterion (i), and only later we review work on criteria (ii) and (iii).

Perhaps the simplest stabber one can consider is a halfplane,
whose boundary is defined by a line. Hence, a stabbing halfplane is
defined by its boundary: a line that intersects all segments (note that the complement of a stabbing halfplane is another stabbing halfplane, with the same boundary line). In this
context, Edelsbrunner et al.~\cite{emprww} presented an
$O(n\log n)$ time algorithm for solving the
problem of constructing a representation of all combinatorially different stabbing lines
(with any orientation) of a given set of $n$ segments. Moreover, they also gave an $\Omega (n\log n)$ lower bound for the problem. However, the lower bound from~\cite{emprww} does not apply to the
decision problem (i.e., determining whether or not there exists a
line stabber for a set of segments). Afterwards, Avis et al.~\cite{ARW} gave an $\Omega(n\log n)$ lower bound that holds even for the decision problem in the fixed order algebraic decision tree model.

When no stabbing halfplane exists, it is natural to ask for a
stabbing \emph{wedge} (the stabbing region defined by the
intersection of two halfplanes).
Claverol et al.~\cite{mc,cggms}
studied the problem of reporting all combinatorially different
stabbing wedges of $S$. The time and space complexities of their
algorithm depend on two parameters of $S$, which in the worst case result in $O(n^3\log n)$ time and $O(n^2)$ space.
The authors of~\cite{cggms} also studied some other stabbers
such as double-wedges and zigzags.
The problem of computing stabbing \emph{circles} of a set $S$ of $n$ line segments in the plane has been studied very recently by Claverol et al.~\cite{CKPSS15},
obtaining the following results:
(i) a representation of all the combinatorially different stabbing circles for $S$
%, and the ones with maximum and minimum radius,
can be computed in $O(n^2)$ time and space;
and (ii) one can report all stabbing circles for a set of parallel segments in $O(n\log^2 n)$ time and $O(n)$ space.

Many of the problems studied for criteria (i) (and (ii)) can be formulated in terms of \emph{color-spanning} objects. In this case, the input is a set of $n$ colored points, with $c$ colors, and the goal is to find an object (rectangle, circle, etc.) that contains \emph{at least} (or {\em exactly}) $k$ points of each color. Our setting, with criterion (i), is the particular case in which $c=n/2$ and we want to contain {\em exactly} one point of each color class.
Barba {\em et al.}~\cite{similar} considered several problems related to color-spanning objects for criterion (i).
In particular, they present algorithms that can compute disks, squares, and axis-aligned rectangles that contain exactly one element of each of the $c$ color classes in $O(n^2c)$ time.

In a very recent paper, Arkin et al.~\cite{aghmz-mlscp-15} studied the computation of \emph{minimum length separating cycles} for a set of pairs of points. The problem can be seen as that of finding a polygonal stabber of minimum perimeter for a set of line segments. Since the problem is NP-hard, Arkin et al.~\cite{aghmz-mlscp-15} focused on approximation algorithms for several special cases.

%
%\carlos{La referencia siguiente no la pondr\'{i}a ya que est\'a
%actualmente en estudio}
%The related problem of deciding whether $S$ can be stabbed by a
%circle (another stabber with constant complexity) was studied by
%Claverol~\cite{mc}, using the same criterion used in this
%paper, i.e., endpoints inside (resp. outside) the stabbing circle
%are red (resp. blue). She gave an $O(n^2\sqrt{n}\log n)$ time
%algorithm for computing a minimum-radius stabbing circle (if one
%exists); for parallel segments she shows how to compute all the
%$O(n)$ combinatorially different stabbing circles in $O(n^2)$ time
%and $O(n)$ space.
%
%
\begin{figure}[tb]
\begin{center}
\includegraphics{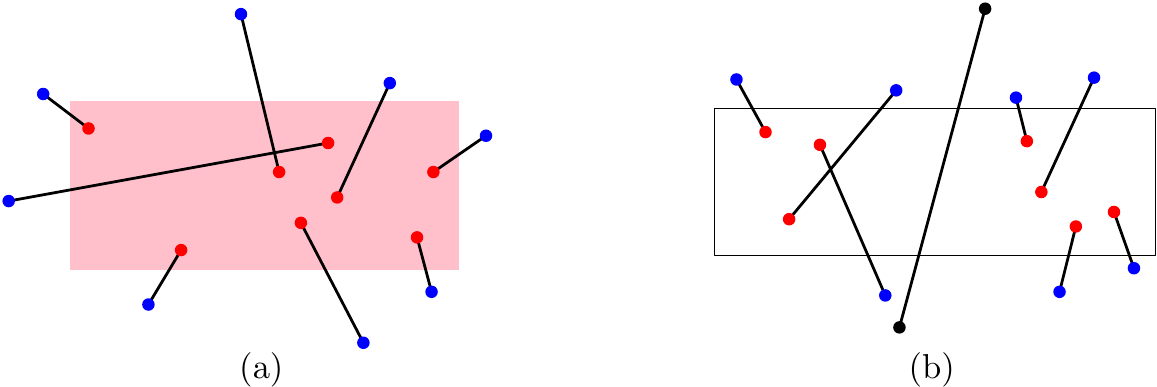}
\end{center}
\caption{(a) A set of segments that has a stabbing rectangle. (b)
A set of segments for which no stabbing axis-aligned rectangle exists.}\label{no-stab-rectangle}
\end{figure}

\subsubsection{Other stabbing criteria} 
There is plenty of related work for the two other criteria mentioned before: (ii) containing at least one endpoint, or (iii) intersecting each segment. 
In the following we briefly mention some of the most relevant papers for these criteria.

%Most related work has been for objects in two dimensions~\cite{AB,OR1,GS,BCE,LMR,R,BKM,MKGB,MGR}, but some three-dimensional variants have been studied as well~\cite{aw88,belzw05,fhph12,krs10,p93}).
%\rodrigo{Estaria bien mencionar los principales resultados de estos}

Most previous work for criteria (ii) and (iii) has focused on objects in two dimensions.
Atallah and Bajaj~\cite{AB} considered the problem of determining a stabbing line for a collection of objects in the plane.
Bhattacharya et al.~\cite{BCE} studied the computation of the shortest stabbing segment of a collection of segments (and lines) with criterion (iii).
Arkin et al.~\cite{adkmpsy-ct-14} considered the problem of computing convex stabbers for sets of segments in the plane, also for criterion (iii).
Several types of stabbers have been studied for criterion (ii) in the context of color-spanning objects, namely strips, axis-parallel rectangles~\cite{AHIKLMPS01,DGN}, and circles~\cite{ahikl-fcvd-06}; all of them can be computed in roughly $O(n^2\log c)$ time, for $c$ the number of different colors. 
Several papers~\cite{OR1,GS,R,BKM,MKGB,MGR,DKPPSS} have studied problem variants with the goal of optimizing perimeter or area of the convex polygon stabbing a set of segments, using criteria (ii) and (iii).

Some three-dimensional variants have also been studied, most notably the problem of computing stabbers or transversals of a set of segments in 3D~\cite{aw88,belzw05,fhph12}  and  a few stabbing problems for convex polyhedra~\cite{aw88,krs10,p93} .

%The reference [?] contains a chapter on transversal theory.

%Recently, D\'{i}az-B\'a\~{n}ez et al.~\cite{DKPPSS} considered a similar stabbing concept: a region $\mathcal{R}$ stabs a collection of segments if \emph{at least} one endpoint of each
%segment is in $\mathcal{R}$. In this setting existence of a stabber is always guaranteed, thus they search for the stabbers of minimum perimeter or area. In their work, they show that the general problem is NP-hard and provide polynomial-time algorithms for some particular cases.

Very recently, Arkin et al.~\cite{abcckm-cfc-15} studied other related problems in this setting, mainly focused on minimizing the number objects (e.g., unit squares) needed to contain exactly one point from each color.

Finally, we mention that there are other families of related problems about intersections and transversals, although usually formulated in much more general settings, in work about geometric transversal theory. We refer to~\cite{gpw-gtt-93} for a survey on the topic.

\paragraph{Contributions}

\begin{table}[t]
 \centering
    \begin{tabular}{ | l | l | p{5cm} |}
    \hline
   Stabber & Running time & Reference\\ \hline         \hline
    Horizontal line & $O(n)$ & This paper\\ \hline
    Line & $O(n \log n)$  & Edelsbrunner et al.~\cite{emprww}\\ \hline
    Horizontal strip & $O(n \log n)$ & This paper\\ \hline
 	Quadrant & $O(n \log n)$ & This paper\\ \hline
 	3-sided (axis-parallel) rectangle& $O(n \log n)$ & This paper\\
 \hline
 	Axis-parallel rectangle & $O(n^2 \log n)$  & This paper\\ \hline
 	Wedge & $O(n^3 \log n)$ & Claverol et al.~\cite{cggms}\\ \hline
 	Double-wedge & $O(n^4 \log n)$ & Claverol et al.~\cite{cggms}\\ \hline
	Zigzag  & $O(n^3 \log n)$ & Claverol et al.~\cite{cggms}\\ \hline
	Circle & $O(n^2)$ & Claverol et al.~\cite{CKPSS15}\\ \hline
    \end{tabular}
    \vspace{0.3cm}
    \caption{Worst-case running times of previous and new algorithms that, given a set of segments, report all combinatorially different stabbers of different shapes, containing exactly one endpoint from each segment.
We omit previous results that only work for certain classes of segments (e.g., parallel segments).
}
    \label{tab:results}
\end{table}

In this paper we consider stabbers
that can be described as the intersection of axis-parallel (i.e., horizontal or vertical) halfplanes.
Thus, the shapes we consider are halfplanes, strips, quadrants,
$3$-sided rectangles, and rectangles. This scenario, more restrictive than the ones previously studied for this definition of stabber~\cite{cggms,emprww,mc}, allows us to exploit the geometric structure of the possible stabbers to obtain much faster algorithms.

%We present efficient algorithms for reporting all combinatorially different stabbers that can be described as the intersection of up to four axis-parallel halfplanes. 
In Section~\ref{sec_2} we study the case in which the stabber is formed by one or two halfplanes (that is, halfplanes, strips and wedges). For that purpose, we introduce a general approach that partitions the plane into three regions: a \emph{red} region that must be contained in any stabber, a \emph{blue} region that must be avoided by any stabber, and a \emph{gray} region for which we do not have enough information yet.
\ShoLong{}{The algorithms are based on iteratively classifying segments and
updating the boundaries of these regions, a process that we call \emph{cascading}.} In Section~\ref{sec_3}
we extend this approach to $3$-sided rectangles.
To analyze the running time of our algorithm, we show that the maximum number of combinatorially different stabbers of this type is only $\Theta(n)$.
All algorithms in Sections~\ref{sec_2} and~\ref{sec_3} are asymptotically optimal.
Finally, in Section~\ref{sec_4}, we use the algorithm for 3-sided rectangles to find all different stabbing axis-parallel rectangles.
Table~\ref{tab:results} summarizes the results presented in this paper, together with the previous results for criterion (i) that put no constraints on the input segments.

To the best of our knowledge, the stabbing problems studied in this paper had not been considered before, except for rectangles, which are a particular case of the more general problem studied in~\cite{similar}.
The algorithm presented in Section~\ref{sec_4} (for axis-aligned rectangular stabbers) improves the result of~\cite{similar} for the particular case in which $c=n/2$ and one looks for an axis-aligned color spanning rectangle. Our algorithm is almost a linear factor faster, and also allows to report all possible solutions (whereas their algorithm only reports one).

% studied a generalization
%of our problem. The input is a set of $n$ points partitioned into $c$ color classes and a natural number $k$.
%(similar to~\cite{AHIKLMPS01,ahikl-fcvd-06,DGN}).
%However, in addition they are given
%the number of points of each color that need to be covered.
%Their goal is to find a color-spanning object containing \emph{exactly} $k$  points of each color. The objects considered in~\cite{similar} include disks, squares, and axis-aligned rectangles. The case of rectangles includes, as a particular case, the problem studied in this paper for isothetic rectangles (i.e., we study the case in which $k=n/2$ and exactly one point from each color must be covered).

\subsection{Preliminaries}\label{sub_sec_1.1}

The input to our problems consists of a set $S$ of $n$
segments in $\mathbb{R}^2$. For simplicity in the exposition, we assume that there
is no horizontal or vertical segment in $S$, and that all segments
have non-zero length. The modifications needed to make our
algorithms handle these special cases are straightforward, albeit
rather tedious.

Let $s=\{p,q\}$ be a segment of endpoints $p,q$; recall that in this paper a segment can be seen as a pair of points, and so we will say indistinctly \emph{endpoints of segments of $S$} and \emph{points of $S$}. For distinguishing the upper and the lower endpoints of $s$, say $p$ and $q$ respectively, we use $s=(p,q)$. Given a point $p\in \mathbb{R}^2$, we write $x(p)$ (resp. $y(p)$) for its $x$- (resp.
$y$-) coordinate. Let $y_b=\max_{(p,q)\in S}\{y({q})\}$ and
$y_t=\min_{(p,q)\in S}\{y(p)\}$; these values correspond to the
$y$-coordinates of the highest bottom endpoint and the lowest top
endpoint, respectively, of the segments of $S$. Let $s_b=(p_b,q_b)$ and $s_t=(p_t,q_t)$ where $y({q}_b) = y_b$ and $y(p_t)=y_t$.

We say that a point is \emph{red} if it is contained in the stabbing region $\mathcal{R}$, or \emph{blue} otherwise. Note that we consider
$\mathcal{R}$ a closed region, thus points on its boundary are considered red.
In addition, throughout this paper we assume that all solutions (i.e., red stabbing regions) are represented by a minimal number of red points on the region boundary: one point for halfplanes, two points for strips and quadrants, three points for 3-rectangles, and four points for rectangles.

We say that a stabbing region $\mathcal{R}$ is \emph{trivial} if there is a stabbing region $\mathcal{R}'$ described as the intersection of fewer halfplanes that is combinatorially equivalent to $\mathcal{R}$ (i.e., gives the same classification of the endpoints of the segments in $S$). In principle, our algorithms report all combinatorially different solutions, trivial and non-trivial. However, we note that it is very easy to detect when a solution is trivial (and thus, if needed we can avoid reporting those). Details of this are given in the proof of Theorem~\ref{theo_enumstrip} for strips, but the same approach extends to all the other stabbers considered.

%
%
%Unfortunately, it is not clear how to extend the approach used for
%strips and wedges to the case in which the stabber is formed by
%$3$ or $4$ halfplanes. Thus, our algorithms for these types of
%stabbers are based on reusing the one for strips repeatedly in an
%almost black-box fashion, resulting in relatively high running
%times.
%

Finally, we note that even though we present our algorithms for stabbers
defined by axis-parallel halfplanes, they extend to any two fixed orientations by making an appropriate affine transformation.

%
%
%\delia{Carlos y yo pensamos que podiamos quitar este parrafo, que os parece?}.
%

\section{Stabbing with one or two halfplanes}\label{sec_2}

In this section we look for stabbers that can be described as the
intersection of at most two halfplanes. That is, our goal is to
find halfplanes, strips, or quadrants that contain exactly one
endpoint from each segment.
Recall that such stabbing objects do not always exist.

\subsection{Stabbing halfplane}\label{sec:line}

As a  warm-up, we give a simple algorithm for determining if a
stabbing horizontal halfplane exists. That is, a horizontal line
such that one of the (closed) halfplanes defined by the line
contains exactly one endpoint from each segment. 
%Observe that
%such a stabbing halfplane can be perturbed so that it does
%not have an endpoint of a segment on its boundary. 
%In this case,
%the complement of a stabbing halfplane is also a stabber. 
Thus, we
are effectively looking for a horizontal line that intersects all the
segments (such a line can define up to two complementary stabbers).

Although the algorithm is straightforward, we explain it for
completeness.
As we are dealing with horizontal stabbers, the problem becomes
essentially one-dimensional, and it will ease our presentation to
state the problem in that way.

All segments can be projected onto
the $y$-axis, becoming intervals. Considering the set $\overline{S}=\{\overline{s} \, | \, s\in S\}$
of projected segments, the question is simply whether all the intervals in
$\overline{S}$ have a point in common. Then, any horizontal line $y:=u$ stabbing
$S$ must have its $y$-coordinate between the two values $y_b$ and $y_t$, namely
$y_b \leq u \leq y_t$. Therefore such a line exists if and only if
$y_b\le y_t$. Moreover, whenever this condition happens, both the
upper and lower halfplanes will be stabbing halfplanes (and any other
horizontal halfplane will be combinatorially equivalent to one of the two).
This simple observation directly leads to a straightforward
linear-time algorithm.

\begin{observation}\label{algo_line}
We can find all the two axis-aligned stabbing halfplanes of a set $S$ of $n$ segments (or determine that none exists) in $O(n)$ time.
\end{observation}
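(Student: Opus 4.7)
The plan is to reduce the problem to a simple one-dimensional test on projections, as the exposition preceding the statement already hints. Because a horizontal halfplane is determined entirely by the $y$-coordinate of its bounding line, a halfplane $H$ stabs $S$ (in the sense that $H$ contains exactly one endpoint of each segment) if and only if its bounding horizontal line $\ell: y = u$ lies strictly between the two endpoints of every segment $s = (p,q)$, i.e.\ $y(q) \le u \le y(p)$ for each $s\in S$ (so that the lower endpoint is below the line and the upper one above, or vice versa). Intersecting these $n$ constraints, the set of feasible heights $u$ is exactly the interval $[y_b, y_t]$, which is non-empty precisely when $y_b \le y_t$.

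With this characterization, the algorithm is immediate. First I would make a single pass over the $2n$ endpoints, maintaining running values of $y_b = \max_{(p,q)\in S} y(q)$ and $y_t = \min_{(p,q)\in S} y(p)$; this takes $O(n)$ time and $O(1)$ additional space. Then I compare $y_b$ and $y_t$: if $y_b > y_t$ no stabbing horizontal halfplane exists and the algorithm reports failure; otherwise any value $u \in [y_b, y_t]$ (for concreteness, $u = y_b$) defines a bounding line $\ell$, and the two closed halfplanes $\{y \ge u\}$ and $\{y \le u\}$ are reported.

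Finally I would justify that this exhausts all combinatorially different stabbing horizontal halfplanes. Any other stabbing horizontal halfplane has its boundary at some $u' \in [y_b, y_t]$, and the classification of endpoints (upper ones above $\ell$, lower ones below) is the same for every such $u'$; hence it is combinatorially equivalent either to $\{y \ge u\}$ or to $\{y \le u\}$, depending on the choice of closed side. The only mild subtlety worth noting is the degenerate case $y_b = y_t$, where the line passes through endpoints of $s_b$ and $s_t$; under the closed-region convention of the paper this is still handled correctly since each such endpoint is counted as contained. Together these observations give the claimed $O(n)$ bound, and the lower bound is trivial since each endpoint must be inspected at least once to decide whether its $y$-coordinate affects $y_b$ or $y_t$.
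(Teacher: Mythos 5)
Your proposal follows essentially the same route as the paper: project onto the $y$-axis, compute $y_b$ and $y_t$ in a single linear pass, test $y_b\le y_t$, and report the two complementary halfplanes, noting that any other horizontal stabbing halfplane is combinatorially equivalent to one of them. The only (shared) looseness is the boundary case $y_b=y_t$, where under the closed-halfplane convention the reported halfplanes actually contain both endpoints of $s_b$ or $s_t$; the paper glosses over this in exactly the same way, so your argument matches its proof.
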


\subsection{Stabbing strips}\label{sub_sec_2.2}

We now consider the case in which the stabbing region is a
horizontal strip. Note that the existence of stabbing halfplanes directly implies the existence
of stabbing strips, but the reverse is not true.

As in Section~\ref{sec:line}, we can ignore the $x$-coordinates of the endpoints,
project the points onto the $y$-axis, and work with the set
$\overline{S}$ instead. The endpoints of the classified segments
can be seen in the projection onto the $y$-axis as a set of blue
and red points.
It follows that there
is a separating horizontal strip for them if and only if the red
points appear contiguously on the $y$-axis. More precisely, the
points must appear on the $y$-axis in three contiguous groups,
from top to bottom, first a blue group, then a red group, and then
another blue group. We refer to the two groups of blue points
as the \emph{top} and \emph{bottom} blue points, respectively. We
denote the intervals of the $y$-axis spanned by them by $B_t$ and
$B_b$, respectively. The interval of the $y$-axis spanned by the
group of red points is denoted by $R$.
Moreover, since all points above $B_t$ and below $B_b$ must be
necessarily blue, we extend $B_t$ and $B_b$ from $+\infty$ and
until $-\infty$, respectively.
In this way the $y$-axis is partitioned into two blue intervals, one red interval, and two intervals separating them that we will consider of color gray.
See Fig.~\ref{fig:strip}(a).

\begin{figure}[tb]
\begin{center}
\includegraphics{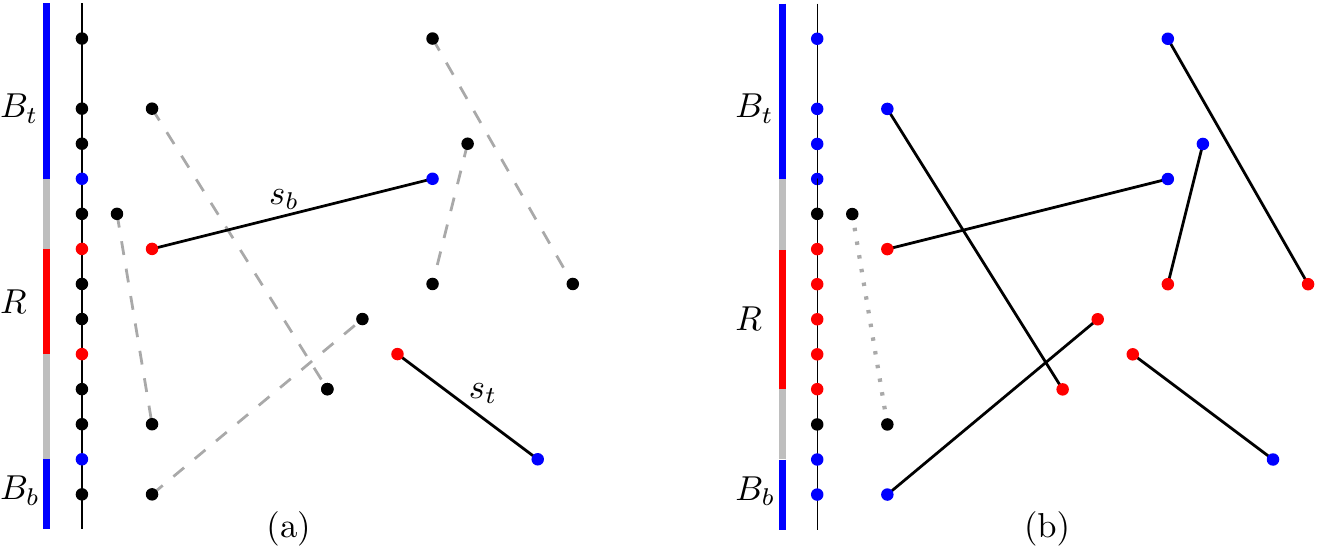}
\caption{Computing a horizontal strip stabber. (a) Result after
classifying $s_t$ and $s_b$. (b) Result after cascading; the
red region has grown, and only one segment remains unclassified.
In both figures, segments of $U$ are shown dotted, those of $W$
are dashed, and those of $C$ are depicted with a solid
line.}\label{fig:strip}
\end{center}
\end{figure}

Our algorithm relies on the following two observations.

\begin{observation}\label{obs_clasi}
Let $s_i=(p_i,q_i)$ and $s_j=(p_j,q_j)$ be two segments of $S$ such that the projected segment
$\overline{s}_i$ is above the projected segment $\overline{s}_j$ (in particular, this implies that $\overline{s}_i$ and $\overline{s}_j$ are disjoint). Then, any
horizontal stabbing strip for $S$ must contain $q_i$ and $p_j$.
\end{observation}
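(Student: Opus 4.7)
The plan is to reduce everything to a one-dimensional monotonicity argument, exploiting the fact that a horizontal strip is determined entirely by $y$-coordinates. First I would unwind the hypothesis ``$\overline{s}_i$ is above $\overline{s}_j$'' into the explicit chain
\[
y(p_i) > y(q_i) > y(p_j) > y(q_j),
\]
using the convention that in $s=(p,q)$ the point $p$ is the upper endpoint and $q$ the lower one, together with disjointness of the two projected intervals.

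Next I would argue the two inclusions separately, each by contradiction. Suppose, toward contradiction, that $q_i$ is not in the stabbing strip $\mathcal{R}$. Since $\mathcal{R}$ must contain exactly one endpoint of $s_i$, the other endpoint $p_i$ is in $\mathcal{R}$. Because $\mathcal{R}$ is a horizontal strip, membership depends only on the $y$-coordinate; so $y(p_i)\in R$ and $y(q_i)\notin R$ force the lower boundary of $\mathcal{R}$ to lie strictly above $y(q_i)$. By the chain of inequalities above, this boundary then also lies strictly above $y(p_j)$ and $y(q_j)$, so neither endpoint of $s_j$ lies in $\mathcal{R}$, contradicting that $\mathcal{R}$ is a stabber. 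The argument for $p_j$ is symmetric: if $p_j\notin\mathcal{R}$ then $q_j\in\mathcal{R}$, so the upper boundary of $\mathcal{R}$ lies strictly below $y(p_j)$, hence below both $y(q_i)$ and $y(p_i)$, leaving $s_i$ without any endpoint in $\mathcal{R}$.

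I do not expect any real obstacle: the statement is essentially a packaging of the one-dimensional observation that, when four points appear in the order listed above on the $y$-axis, any interval containing one of the two extreme points must also contain the adjacent middle point in order to retain a chance of containing exactly one endpoint of the other segment. The only minor care needed is that $\mathcal{R}$ is closed (as stipulated in Section~\ref{sub_sec_1.1}), but the strict inequalities between the four $y$-coordinates make this irrelevant.
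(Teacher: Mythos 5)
Your proof is correct and follows essentially the same route as the paper's: both reduce to the one-dimensional fact that the strip's projection onto the $y$-axis is an interval, and then exploit the ordering $y(p_i)>y(q_i)>y(p_j)>y(q_j)$ together with the exactly-one-endpoint condition. The only cosmetic difference is that you derive the contradiction from the strip missing both endpoints of the other segment, whereas the paper phrases it as the strip being forced to contain both endpoints of one segment; these are the same betweenness argument.
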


\begin{proof}
It follows from the fact that any strip that contains $q_i$ (and some endpoint of $s_j$) will also contain $p_i$. Similarly, strips that contain $q_j$ (and endpoints of $s_i$) will contain $q_i$. Since a strip cannot contain both endpoints of a segment of $S$, the claim follows.
\end{proof}

\begin{observation}\label{obs_init}
Any non-trivial horizontal stabbing strip $\mathcal{R}$ for $S$ contains points $q_b$ and $p_t$.
\end{observation}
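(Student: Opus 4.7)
The plan is to prove the statement by contradiction, handling the two claims ($q_b \in \mathcal{R}$ and $p_t \in \mathcal{R}$) symmetrically. I would focus on $q_b$; the argument for $p_t$ is an exact mirror, flipping top/bottom roles and the directions of inequalities.

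First I would suppose, for contradiction, that $\mathcal{R}$ is a non-trivial horizontal stabbing strip that does not contain $q_b$. Since $\mathcal{R}$ stabs the segment $s_b=(p_b,q_b)$, it must contain its other endpoint $p_b$. Because $p_b$ lies strictly above $q_b$ (as $p_b$ is the upper endpoint of $s_b$) and $\mathcal{R}$ is a horizontal strip that contains $p_b$ but not $q_b$, the lower boundary of $\mathcal{R}$ must have $y$-coordinate strictly greater than $y(q_b)=y_b$.

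Next I would exploit the definition $y_b=\max_{(p,q)\in S}\{y(q)\}$: every bottom endpoint $q_i$ of every segment in $S$ satisfies $y(q_i)\le y_b$, hence lies strictly below $\mathcal{R}$. Consequently $\mathcal{R}$ contains no bottom endpoint of any segment of $S$, so to stab $S$ it must contain the top endpoint of every segment. This is the main structural step, and I expect it to be the cleanest part of the argument; the only thing to be careful about is the strict-vs-closed boundary issue, which is handled by the assumption (stated in Section~\ref{sub_sec_1.1}) that $\mathcal{R}$ is closed and the minor fact that no segment is horizontal, so the bottom endpoints are genuinely below $\mathcal{R}$.

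Finally I would derive the triviality contradiction. Let $u$ denote the $y$-coordinate of the lower edge of $\mathcal{R}$, and consider the halfplane $\mathcal{R}'=\{(x,y):y\ge u\}$, which is the intersection of a single halfplane rather than two. Every top endpoint of $S$ lies in $\mathcal{R}$ and therefore in $\mathcal{R}'$, while every bottom endpoint lies below $u$ and therefore outside $\mathcal{R}'$. Hence $\mathcal{R}'$ classifies the endpoints of $S$ exactly as $\mathcal{R}$ does, contradicting the assumption that $\mathcal{R}$ is non-trivial. The symmetric case ruling out $p_t\notin\mathcal{R}$ yields the upper halfplane $\{y\le u'\}$ as the triviality witness. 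The main obstacle is not any hard computation but rather making clear why ``strict'' vertical separation follows from the assumption of no horizontal segments together with the maximality of $y_b$; once that is in place the triviality witness is immediate.
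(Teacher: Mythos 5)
Your proposal is correct: assuming $q_b\notin\mathcal{R}$ forces $p_b\in\mathcal{R}$, hence (since $p_b$ lies above $q_b$ and the strip is convex in $y$) the strip's lower edge lies strictly above $y_b$, so $\mathcal{R}$ contains no bottom endpoint, hence all top endpoints, and the single halfplane bounded below by the strip's own lower edge classifies the endpoints identically, contradicting non-triviality; the argument for $p_t$ is the exact mirror. This is a genuinely different decomposition from the paper's proof, which splits on whether $y_b>y_t$ or $y_b<y_t$: in the first case it invokes Observation~\ref{obs_clasi} (the projections of $s_b$ and $s_t$ are disjoint, so every stabbing strip must contain $q_b$ and $p_t$), and only in the second case does it run a triviality argument, anchored at the fixed line $y=y_t$ after a ``WLOG $\mathcal{R}$ contains $y=y_t$'' step. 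Your version anchors the halfplane witness at the strip's own boundary instead, which removes both the case distinction and the dependence on Observation~\ref{obs_clasi}, and as a side effect also covers the degenerate case $y_b=y_t$ that the paper's case split passes over; note that it is immaterial that in the configuration $y_b>y_t$ the derived situation is geometrically unrealizable, since you are inside a proof by contradiction and the triviality contradiction is validly reached. What the paper's route buys in exchange is slightly stronger information in the $y_b>y_t$ case: there the conclusion follows from Observation~\ref{obs_clasi} alone and holds for \emph{every} stabbing strip, trivial or not, and it reuses a lemma the cascading procedure needs anyway. One cosmetic slip: your mirror-case witness $\{y\le u'\}$ is the lower, not the upper, halfplane, but this does not affect the argument.
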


\newcommand{\pfinit}{\begin{proof}
 If $y_b > y_t$, we can use Observation~\ref{obs_clasi} to prove our statement. Thus, it remains to consider the case in which $y_b < y_t$.

First observe that $\mathcal{R}$ must contain line $y=y_b$ or $y=y_t$. Indeed, if it contains neither of the two lines, the stabber is either in $(i)$ the halfplane $y>y_t$, $(ii)$ the halfplane $y<y_b$, or $(iii)$ in the strip $y_b< y< y_t$. However, in case $(i)$ $\mathcal{R}$ does not contain an endpoint of $s_t$. Similarly, any stabber in $(ii)$ cannot contain endpoints of $s_b$, and stabbers in $(iii)$ cannot contain endpoints of neither segment, a contradiction.

Without loss of generality, we can assume that $\mathcal{R}$ contains the line $y=y_t$ (and thus, it contains $p_t$). Assume, for the sake of contradiction, that $\mathcal{R}$ does not contain $q_b$. By definition of $s_b$, no bottom endpoint of a segment of $S$ can be in $\mathcal{R}$ (since they are below $s_b$). In particular, $\mathcal{R}$ must contain all upper endpoints of the segments of $S$, and thus $\mathcal{R}$ is equivalent to the halfplane $y\geq y_t$. However, this contradicts with the fact that $\mathcal{R}$ is a non-trivial stabber.
\end{proof}
}
\ShoLong{}{\pfinit}

%\begin{corollary}\label{cor_class}
%Let $S$ be a set of segments without a stabbing halfplane, and
%let $s_b$ and $s_t$ be the segments in $S$ with highest bottom
%endpoint and lowest top endpoint, respectively. Then there is a
%unique way to classify $s_b$ and $s_t$ with a horizontal strip.
%\end{corollary}

With the previous observations in place, we can present our
algorithm. First we give an intuitive idea in rather general terms because it will
also be used in the upcoming sections.

The algorithm starts by classifying a few segments of $S$ using some geometric observations (in the case of strips, Observation~\ref{obs_init}). As soon as some points are classified, the plane can be partitioned into three regions: the {\em red} region (a portion of the plane that must be contained by any stabber), the {\em blue} region (a portion of the plane that cannot be contained in any stabber) and the {\em gray} region (the complement of the two other regions: regions of the plane for which we still do not have enough information yet). If a segment of $S$ has an endpoint in the blue region it can be classified as blue  (similarly if an endpoint is in the red region). The other endpoint of the segment must have the opposite color. This coloring may enlarge either the red or blue regions, which may further allow to classify other segments of $S$. This results in an iterative process that we call \emph{cascading procedure}.
We continue in this fashion until either a contradiction is found (that is, the red and blue regions overlap), or all segments of $S$ have been classified (and hence a stabber has been found).

%A similar approach will be used for other shapes, thus we present it in a more general fashion.

%Next we explain the algorithm in more detail.
At any instant of time $S$ is partitioned into three sets $C$, $W$, and $U$. A segment is in $C$ if it has been already classified, in
$W$ if it is waiting for being classified (there is enough
information to classify it, but it has not been done yet), or in
$U$ if its classification is still unknown. The algorithm is
initialized with $C=W=\emptyset$, and $U=S$.

\paragraph{Regions}
In addition to the three sets of segments $C$, $W$, and $U$, the algorithm maintains
red and blue candidate regions that are guaranteed to be contained
or avoided in any solution, respectively. When we are looking for a horizontal strip, these regions will also be horizontal strips. Thus, it suffices to maintain the projection of the regions on the $y$-axis. The blue and red regions are represented by three intervals $B_t$, $R$, and $B_b$. We define the \emph{blue region} as $B_t \cup B_b$ and the \emph{red region} as $R$. The complement
of the union of the blue and red regions is called the \emph{gray
region}. Note that the gray region, like the blue one, consists of
two disjoint components. During the execution of the algorithm the
regions will become updated as new segments become classified.

The algorithm starts by computing $s_b$ and $s_t$. If a stabbing
halfplane exists (i.e., $y_b\leq y_t$), then the only two (trivial) stabbing strips can be reported as described above.
Otherwise, Observation~\ref{obs_init} indicates how $s_b$ and $s_t$ should be classified:  $q_b$ and $p_t$ as red, and $p_b$ and $q_t$ as blue. See Fig. \ref{fig:strip}(a). Thus, they can be moved from $U$ to $W$.

\paragraph{Cascading procedure}
The procedure iteratively classifies segments in $W$
based on the red and blue regions. This is an iterative process in
the sense that the classification of one segment can make the blue
or red region grow, making other segments move from $U$ to $W$.

In the iterative phase, the algorithm picks a segment $s\in W$ (if any), assigns the corresponding colors to its endpoints, and moves $s$ from $W$ to $C$. If a newly assigned endpoint lies outside its corresponding zone, the red or blue area must grow to contain that point. Note that after the red or
blue region grows, other segments can change from $U$ to~$W$. As mentioned above, this  process continues until either: (i) a contradiction is found (the red region is forced to overlap with the blue region), or (ii) set $W$ becomes empty. See Fig. \ref{fig:strip}(b) for an example.

\begin{lemma}\label{lemma_keystrip}
If the cascading procedure finishes without finding a
contradiction, each remaining segment in $U$ has an endpoint in
each of the connected components of the gray region.
\end{lemma}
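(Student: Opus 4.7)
The plan is to show that every segment $s=(p,q)\in U$ surviving the cascading must have $p\in G_t$ and $q\in G_b$. I would first invoke the rule governing the cascading procedure itself: a segment migrates from $U$ to $W$ precisely when one of its endpoints enters one of the colored regions $B_t$, $R$, or $B_b$. Hence, if $s$ remains in $U$ at termination, neither $p$ nor $q$ can lie in $B_t\cup R\cup B_b$, so both endpoints already lie in the gray region $G_t\cup G_b$.

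The nontrivial step is to rule out the two "bad" configurations where both endpoints of $s$ sit in the same gray component. For this I would use the initial classification supplied by Observation~\ref{obs_init}: $q_b$ and $p_t$ are declared red from the very first step. Because $R$ only grows monotonically during cascading and classified points never change color, $y(q_b)$ and $y(p_t)$ remain inside $R$ at every instant. Writing the final red interval as $R=[r_1,r_2]$, this monotonicity gives
\[
r_1\;\le\; y(p_t)\;<\;y(q_b)\;\le\; r_2,
\]
where the strict middle inequality comes from the assumption $y_b>y_t$ (otherwise a stabbing halfplane would already exist and the cascading would not be running).

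Now for any segment $s=(p,q)\in S$, the defining properties $y(q_b)=\max_i y(q_i)$ and $y(p_t)=\min_i y(p_i)$ yield $y(q)\le y(q_b)\le r_2$ and $y(p)\ge y(p_t)\ge r_1$. Since $s\in U$, the endpoint $q$ cannot lie in $R$, so the weak inequality $y(q)\le r_2$ must tighten to $y(q)<r_1$; thus $q$ sits strictly below $R$, and since $q\notin B_b$ either, this forces $q\in G_b$. The symmetric argument applied to $p$, using $y(p)\ge r_1$ together with $p\notin R\cup B_t$, forces $y(p)>r_2$ and hence $p\in G_t$, completing the proof.

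The main obstacle I anticipate is simply making the monotonicity argument airtight: one needs that $q_b$ and $p_t$ are red throughout cascading (so that $y(q_b),y(p_t)\in R$ at termination) and that a segment is moved out of $U$ as soon as any endpoint enters a colored region. Both facts follow directly from the description of the algorithm, and once they are nailed down, the extremal inequalities $y(q)\le y(q_b)$ and $y(p)\ge y(p_t)$ together with the exclusion of $R$, $B_t$, and $B_b$ pin each endpoint into the correct gray component almost by definition.
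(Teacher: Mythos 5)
Your proof is correct and takes essentially the same route as the paper: both arguments rest on the facts that segments surviving in $U$ have both endpoints in the gray region, that the final red interval contains $y(p_t)$ and $y(q_b)$ (these points are red from the initial classification and $R$ only grows), and the extremal definitions of $y_t$ and $y_b$. The paper states it as a brief contradiction (both endpoints in, say, the lower gray component would force $y(p)<y_t$), while you argue directly that the bottom endpoint must lie below $R$ and the top endpoint above it; the underlying content is identical.
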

\begin{proof}
By definition, when the cascading procedure finishes, all segments still in $U$ must have both endpoints in the gray region. % (otherwise, list $W$ would not be empty).
Assume, for the sake of contradiction, that there exists a
segment $s=(p,q)\in U$ \new{with both endpoints in} the same
gray component, say, the lower one. Recall that, by construction, the red region $R$ contains the
interval $[y_b,y_t]$. In particular, we have $y(p)<y_t$, giving
a contradiction with the definition of $y_t$.
\end{proof}

\begin{corollary}\label{cor_solution}
A horizontal stabbing strip for $S$ exists if and only if the cascading procedure finishes
without finding a contradiction.
\end{corollary}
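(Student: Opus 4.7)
The plan is to prove both directions of the biconditional, leveraging the $y$-convexity of any horizontal strip (if it contains two points, it contains every point with intermediate $y$-coordinate).

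For the ($\Leftarrow$) direction, suppose the cascading finishes without contradiction. Write $R = [y_l, y_h]$. By Lemma~\ref{lemma_keystrip}, every remaining segment $(p,q) \in U$ has $p$ in the upper gray component (so $y(p) > y_h$) and $q$ in the lower gray component (so $y(q) < y_l$). I choose $y^*$ in the upper gray satisfying $y^* \geq \max_{(p,q)\in U} y(p)$ (or $y^* = y_h$ if $U = \emptyset$), and claim the strip $\mathcal{R} = \{(x,y) : y_l \leq y \leq y^*\}$ stabs $S$. The verification is routine: every classified red endpoint sits in $R \subseteq \mathcal{R}$; every classified blue endpoint sits in $B_b$ (below $y_l$) or in $B_t$ (above $y^*$) and is therefore outside $\mathcal{R}$; and every $(p,q) \in U$ satisfies $p \in \mathcal{R}$ while $q$ lies below $y_l$, outside $\mathcal{R}$.

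For the ($\Rightarrow$) direction, I fix any stabbing strip $\mathcal{R}_0$ and prove by induction on the cascading steps the invariant that $R$ is always contained in the $y$-projection of $\mathcal{R}_0$ and $B_t \cup B_b$ is contained in its complement; since these projections are disjoint, no contradiction can ever arise. The base case follows from Observation~\ref{obs_init} and $y$-convexity: $q_b, p_t \in \mathcal{R}_0$ forces the initial interval $[y_t, y_b]$ into $\mathcal{R}_0$, while $p_b, q_t \notin \mathcal{R}_0$ combined with $y$-convexity forces the initial rays of $B_b, B_t$ outside. At each inductive step a segment $(p,q)$ moves from $W$ to $C$ because one endpoint already lies in a region of determined color; the other endpoint then inherits the opposite color in $\mathcal{R}_0$, and $y$-convexity propagates this to the whole newly added interval or ray of $R$, $B_b$, or $B_t$.

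The main obstacle is the case analysis in the inductive step: I must verify that each extension grows the correct region (for instance, that a newly blue endpoint really lies below the current $R$, so $B_b$ grows rather than $B_t$). This reduces to the observation that whenever a cascading step fires, the position of the newly colored endpoint relative to $R$ is already pinned down by consistency with the existing regions, after which $y$-convexity of $\mathcal{R}_0$ completes the argument.
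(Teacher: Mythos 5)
Your proof is correct and follows essentially the same route as the paper: for the ``no contradiction'' direction you build a stabber by absorbing one gray component into the red side and the other into the blue side (the paper extends $B_t$ over the upper gray component and $R$ over the lower one, ``or vice versa''), relying on Lemma~\ref{lemma_keystrip}, and for the other direction you make explicit, via induction on cascading steps, the soundness invariant (red region contained in, blue region disjoint from, any stabber) that the paper dismisses as ``clear from the above observations.'' No substantive difference in approach; your write-up is just a more detailed version of the paper's argument.
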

\begin{proof}
From the above observations, it is clear that if the cascading
procedure finds a contradiction, then there will be no
solution. Likewise, if no contradiction is found, we can find a
solution by  extending the bottom boundary of $B_t$ until it
covers the upper gray component and extending the bottom boundary
of $R$ until it covers the lower gray component (or vice versa).
\end{proof}

\begin{theorem}\label{theo_strip}
Determining whether there exists a stabbing strip for a set $S$ of $n$
segments can be done in $O(n\log n)$ time and $O(n)$ space.
\end{theorem}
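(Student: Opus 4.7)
The correctness of the cascading procedure for the decision problem is already established by Corollary~\ref{cor_solution}, so the remaining task is to implement it within the claimed bounds. The overall plan is to sort the $2n$ endpoints of $S$ by $y$-coordinate once, in $O(n\log n)$ time and $O(n)$ space, and then show that every subsequent operation of the cascading runs in $O(n)$ additional time. Since the sorting step dominates, this yields the stated complexity.

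For each endpoint I would keep a pointer back to its segment, and for each segment its current status (in $U$, $W$, or $C$, together with tentative colors once the segment has been touched). The set $W$ is kept as a FIFO queue. The three intervals $B_t$, $R$, and $B_b$ are represented by four pointers into the sorted array of endpoints, one for each movable boundary: the lower boundary of $B_t$, the upper and lower boundaries of $R$, and the upper boundary of $B_b$. The key structural observation is that throughout the execution these four pointers move monotonically, because once an endpoint is assigned a color it never changes: the boundaries of $R$ can only move outward and the boundaries of $B_t$ and $B_b$ can only grow toward $R$.

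Initialization using $s_b$ and $s_t$ as prescribed by Observation~\ref{obs_init} takes $O(n)$ time. In the main loop the algorithm pops a segment from $W$, colors both of its endpoints, and moves the segment to $C$. When an endpoint falls outside its color zone, the corresponding boundary pointer is advanced through the sorted array; for every endpoint it sweeps past we consult the segment it belongs to and either (a) ignore it if the segment is already in $W$ or $C$ with a consistent color, (b) move it from $U$ to $W$ if it was still unclassified, or (c) report a contradiction if it is already in $C$ with the opposite color. Because each of the four pointers moves monotonically across $2n$ endpoints and each endpoint is processed $O(1)$ times per pointer, the total cost of all boundary movements and segment classifications is $O(n)$.

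The main obstacle I expect is the amortized bookkeeping: one must argue that detecting contradictions and inserting new segments into $W$ never causes the same endpoint to be revisited a non-constant number of times. This follows cleanly from the monotonicity of the four pointers and the invariant that each endpoint is assigned a color at most once, but has to be stated carefully when updating the data structures. Combining the $O(n\log n)$ sort with the $O(n)$ initialization and cascading yields the stated time bound, and all data structures use $O(n)$ space.
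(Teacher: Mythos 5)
Your proposal is correct, and it proves the same statement by a different implementation of the cascading step than the paper's proof. The paper also delegates correctness to Corollary~\ref{cor_solution} and spends the proof on the data structure, but it maintains a dynamic balanced binary search tree over the endpoints of the \emph{unclassified} segments and answers each ``is there an endpoint in $B_t$, $B_b$, or $R$?'' query in $O(\log n)$ time, deleting endpoints as they are found; since no segment is classified twice, the cascading costs $O(n\log n)$. You instead sort the $2n$ endpoints once and exploit the strip-specific fact that the three regions are intervals whose four movable boundaries only move monotonically (the boundaries of $R$ outward, those of $B_t$ and $B_b$ toward $R$), so four monotone pointers over the sorted array process each endpoint $O(1)$ times and the cascading runs in $O(n)$ amortized time after the sort. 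Both give the claimed $O(n\log n)$ time and $O(n)$ space; your variant is more elementary after sorting and isolates the superlinear cost entirely in the sort (a point of some interest given the paper's open question about the decision version), whereas the paper's range-searching formulation is the one that generalizes to the quadrant and $3$-rectangle cases in Sections~\ref{quadrants} and~\ref{sec_3}, where the blue boundary is a staircase and priority search trees replace the interval queries. One small point to state explicitly in your write-up: besides case (c) of your sweep, a contradiction must also be declared when a segment popped from $W$ forces a color that is incompatible with the current regions (e.g., its blue endpoint lies inside $R$, or its red endpoint lies at or beyond the boundary of $B_t$ or $B_b$); with your pointer representation this is an $O(1)$ comparison, so it affects neither correctness nor the bounds, but it is part of how ``the red and blue regions overlap'' is detected.
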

\newcommand{\pftheostrip}{
\begin{proof}
The correctness follows from the previous observations, thus we
focus on the running time. In order to implement the cascading
procedure efficiently, we must be able to quickly find if there is
a segment that has an endpoint in $B_t$, $B_b$, or $R$. Recall that the
red and blue regions are simply intervals. Thus, the queries can be done by
maintaining a balanced binary tree (sorted by $y$-coordinate) with
the endpoints of all unclassified segments. In this way, a segment
with an endpoint in, say, $B_t$ can be found in $O(\log n)$ time
by querying the tree with the endpoints of $B_t$. Every time an
endpoint inside $B_t$ is found, it is removed from the tree, and
moved into $W$. When classifying a segment, the limits of
$B_t$, $B_b$, or $R$ must also be updated accordingly.
Since these intervals have constant complexity they can be
updated in constant time. Further observe that no interval is classified more
than once, which in particular implies that the total time for the cascading operation is bounded
by $O(n\log n)$.
\end{proof}
}
\ShoLong{}{\pftheostrip}

\paragraph{Reporting all horizontal stabbing strips}
The above algorithm can be modified to report all combinatorially
different horizontal stabbing strips without \new{increasing the asymptotic 
running time}.

We execute the algorithm as usual until the cascading procedure finishes. Let $U=\{s_1,\ldots,s_k\}$ and let $\tau$ be the index of the segment of $U$ whose upper endpoint is lowest (i.e., for any index $i$ such that $s_i=(p_i,q_i)\in U$, it holds that $y(p_i)\geq y(p_{\tau})$). Likewise, we define $\beta$ as the index whose lower endpoint is highest. Any
stabbing strip will either contain: (i) all points in the upper
component of the gray region, (ii) all points in the lower
component of the gray region, or (iii) both $p_\tau$ and $q_\beta$. The
first two can be reported in constant time, and for the third
case, it suffices to classify the two segments, cascade, and
repeat the previous steps.

\begin{theorem}\label{theo_enumstrip}
All combinatorially different stabbing strips for a
set $S$ of $n$ segments can be computed in $O(n\log n)$
time.
\end{theorem}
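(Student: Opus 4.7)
The plan is to iterate the decision procedure of Theorem~\ref{theo_strip} in a branching fashion. First I would run the cascading on $S$; if it produces a contradiction no stabber exists, and otherwise Lemma~\ref{lemma_keystrip} guarantees that every $s_i=(p_i,q_i)\in U$ has $p_i$ in the upper gray component and $q_i$ in the lower one.

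The central structural claim is that every combinatorially different stabbing strip $\mathcal{R}$ falls into exactly one of the three cases listed before the theorem. Mutual exclusion is immediate: case (i) forces every $q_i$ with $i\in U$ to lie outside $\mathcal{R}$, which is incompatible with (ii) and with the requirement $q_\beta\in\mathcal{R}$ of (iii); case (ii) excludes (iii) symmetrically. For exhaustiveness, suppose $\mathcal{R}$ is not in case (i) nor (ii); then some $s_i\in U$ has $p_i\in\mathcal{R}$ and some (possibly different) $s_j\in U$ has $q_j\in\mathcal{R}$. Letting $h$ denote the $y$-coordinate of the top boundary of $\mathcal{R}$, we have $h\geq y(p_i)\geq y(p_\tau)$, and since $p_\tau$ lies in the upper gray component this forces $p_\tau\in\mathcal{R}$; a symmetric argument yields $q_\beta\in\mathcal{R}$, placing $\mathcal{R}$ in case (iii).

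Given this trichotomy, the algorithm is clear. Cases (i) and (ii) each yield at most one combinatorially distinct stabber, which can be assembled in $O(1)$ time from the extremes of the two gray components. For case (iii) I would classify $p_\tau,q_\beta$ as red and $q_\tau,p_\beta$ as blue, extend $R$ upward to $y(p_\tau)$ and downward to $y(q_\beta)$, re-run the cascading, and recurse on the smaller instance. The recursion strictly reduces $|U|$ by at least two at each step, so it terminates in $O(n)$ levels and produces $O(n)$ stabbers, each exactly once. A reported strip is trivial (i.e., combinatorially equivalent to a halfplane) precisely when one of its bounding lines coincides with the extreme top or bottom endpoint of $S$, which is an $O(1)$ check per reported strip and allows the trivial solutions to be filtered out if desired.

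The main obstacle will be the complexity analysis, since the naive bound of $O(\log n)$-cascading-per-level times $O(n)$ levels would yield $O(n^2\log n)$. The decisive observation is that no segment is ever classified twice across the whole execution: once a segment enters $C$ it stays there, and the red and blue intervals only grow monotonically. Hence, exactly as in the proof of Theorem~\ref{theo_strip}, the aggregate cost of all cascading phases taken together is $O(n\log n)$. To locate $\tau$ and $\beta$ at each of the $O(n)$ recursive calls in $O(\log n)$ time I would keep the segments of $U$ in two balanced binary search trees, one keyed by $y(p_i)$ and one by $y(q_i)$, deleting a segment from both whenever it leaves $U$. This adds another $O(n\log n)$ term, matching the claimed overall bound.
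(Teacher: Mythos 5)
Your proposal follows essentially the same route as the paper: the same trichotomy (a stabbing strip contains the whole upper gray component, the whole lower one, or both $p_\tau$ and $q_\beta$), the same classify--cascade--repeat loop with the amortized argument that each segment is classified only once, and a balanced search tree to locate $\tau$ and $\beta$, yielding $O(n\log n)$ overall. The only differences are minor: you spell out a proof of the trichotomy, which the paper asserts without proof, and you replace the paper's orthogonal range-searching test for trivial stabbers by an equivalent $O(1)$ comparison with the extreme $y$-coordinates of $S$; both are fine.
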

\newcommand{\pfEnumStrip}{
\begin{proof}
The only new operations that the algorithm introduces are
searching for the indices $\tau$ and $\beta$, and executing further
cascading procedures. Since we keep the elements of $U$ in \new{a balanced binary tree}, the former operation can be done in logarithmic time. As in
the decision version, each segment can only trigger one cascade
operation, hence the total running time is also bounded by
$O(n\log n)$.

Finally, a brief remark about detecting trivial stabbers. Recall that a stabber is trivial if there is a simpler one (i.e., defined by less halfplanes) that classifies points in the same manner. 
For example, for the case of strips, a stabbing strip is trivial if there exits a stabbing halfplane that gives the same classification. 
It follows that a stabber is trivial if and only if when extending one of its sides to infinity, the region that is ``swept'' is empty of points of $S$. 
This situation can be checked in $O(\log n)$ time using an orthogonal range searching data structure~(e.g., \cite{McC85}).
Since our stabbers are defined by up to four halfplanes, four queries are enough to determine if a given stabber is trivial. This additional verification process adds a cost of $O(k\log n)$, where $k$ is the number of stabbers found. Since we know that $k\in O(n)$, this process does not increase the asymptotic running time of our algorithms. 
\end{proof}
%Finally, observe that we can easily detect if a horizontal strip $\mathcal{R}$ is non-trivial: if $\mathcal{R}$ contains the topmost (or lowermost) endpoint of $S$, then we can extend it upwards (resp. downwards) and it will become a stabbing halfplane. Otherwise, when doing this extension we will include at least a point, and we will lose the stabbing property. This can be done in constant time (once the topmost and lowermost points of $S$ have been computed).

}
\ShoLong{}{\pfEnumStrip}

Next we show that this running time is asymptotically optimal for computing the narrowest strip, and thus also for reporting all of them, at least under the model assumed in this paper.

\begin{theorem}\label{thm:maxgap}
The problem of computing the narrowest stabbing strip for a
set $S$ of $n$ segments is $\Omega(n\log n)$ in the algebraic decision tree model.
\end{theorem}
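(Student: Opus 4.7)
The plan is to reduce the MAX-GAP problem to computing the narrowest stabbing strip. MAX-GAP takes as input $n$ real numbers $v_1,\dots,v_n$ and outputs the largest difference between consecutive values in sorted order; it is a classical result that this problem requires $\Omega(n\log n)$ operations in the algebraic decision tree model.

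Given $v_1,\dots,v_n$, I would first compute $m=\min_i v_i$, $M=\max_i v_i$ and $L=M-m$ in $O(n)$ operations (if $L=0$, the maximum gap is $0$ and we output it directly). Then, for each $i$, I construct a segment $s_i$ whose lower endpoint lies at $y$-coordinate $v_i$ and whose upper endpoint lies at $y$-coordinate $v_i+L$; the $x$-coordinates can be chosen so that the segment is neither vertical nor horizontal. The central claim is that the narrowest stabbing strip of $S=\{s_1,\dots,s_n\}$ has width exactly $L-g$, where $g=\max_k(v_{(k+1)}-v_{(k)})$ is the maximum gap of the inputs; given the width $W^*$ returned by the algorithm we recover $g=L-W^*$ in $O(1)$ time.

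To justify the claim, I would consider any stabbing strip $[l,u]$ and partition the indices into $A=\{i:v_i\in[l,u]\}$ and $B=\{i:v_i+L\in[l,u]\}$. Writing out the stabbing conditions, $l\le v_i\le u<v_i+L$ for $i\in A$ and $v_i<l\le v_i+L\le u$ for $i\in B$, shows that every value in $A$ is strictly smaller than every value in $B$, so the partition corresponds to splitting the sorted sequence at some index $k$. The feasible ranges of $l$ and $u$ then give $u-l\ge(v_{(k)}+L)-v_{(k+1)}=L-g_k$, which is tight, and minimising over $k$ yields the desired $L-g$.

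The main obstacle is ruling out the degenerate configurations $A=\emptyset$ or $B=\emptyset$, as well as wide strips of width at least $L$ that might in principle beat the $L-g$ bound. The choice $L=M-m$ is exactly tight: in both degenerate cases the constraints collapse to contradictions such as $l>M$ together with $l\le M$, and any strip of width at least $L$ is automatically no narrower than $L>L-g$. Composing the $O(n)$ reduction with any algorithm solving the narrowest stabbing strip problem in time $T(n)$ gives a MAX-GAP algorithm running in $O(n)+T(n)$ operations, forcing $T(n)=\Omega(n\log n)$.
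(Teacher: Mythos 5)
Your reduction is correct and is essentially the paper's own proof: both arguments reduce from \textsc{Maximum Gap} by building one segment per input value, all with equal-length projections, so that every stabbing strip corresponds to a split of the sorted values and the narrowest one has width equal to a fixed constant minus the maximum gap. The only differences are cosmetic --- the paper uses projection length $x_M+1$ and adds two sentinel segments to exclude halfplane-equivalent strips, whereas your choice of length exactly $L=M-m$ makes the degenerate cases ($A=\emptyset$ or $B=\emptyset$) collapse on their own; also note that your claim ``every value in $A$ is strictly smaller than every value in $B$'' is reversed (the constraints $v_j< l\le v_i$ put the $B$-values below the $A$-values), though the width bound $u-l\ge (v_{(k)}+L)-v_{(k+1)}$ you then derive is the correct one.
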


\begin{proof}
We present a reduction from the \textsc{Maximum Gap} problem, which is well-known \new{to have an $\Omega(n \log n)$ lower bound} in the algebraic decision tree model.
The input to \textsc{Maximum Gap} is a set of $n$ real numbers $X = \{x_1,x_2,\dots,x_n \}$, which we can assume to be positive. The output is the maximum difference between two numbers in $X$ that are consecutive in sorted order. Next we show that any \textsc{Maximum Gap} instance can be solved with an algorithm that computes the \emph{vertical} stabbing strip of minimum width for a set of $n$ segments.
\begin{figure}[tb]
\begin{center}
\includegraphics{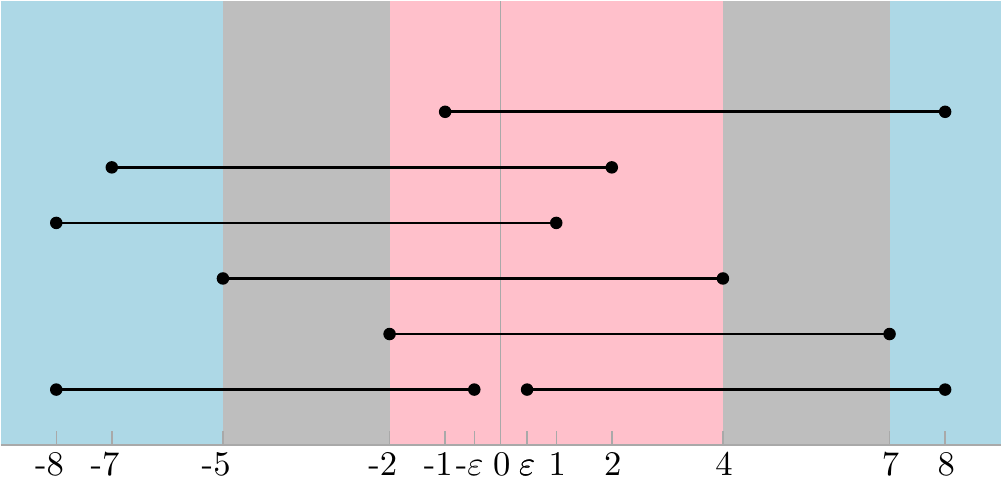}
\end{center}
\caption{Reduction from \textsc{Maximum Gap}. Segments $S$ for $X=\{7,4,1,2,8\}$. The width of the narrowest stabbing strip is $(C - \text{max gap($X$)})$, for $C$ a constant that depends on the maximum number in $X$.}
\label{fig:MaxGapReduction}
\end{figure}

Given the set of numbers $X$, we build a set of horizontal segments $S$ as follows. Let $x_M$ be the largest number in $X$.
For each $x_i \in X$, we add a segment $s_i$ whose endpoints have $x$-coordinates $-x_M+x_i-1$ and $x_i$. 
Since we are interested in vertical strips, the $y$-coordinates of the segments are irrelevant, so for simplicity the $y$-coordinate of the endpoints of $s_i$ is set to $i$, that is, $s_i=\{(-x_M+x_i-1,i),(x_i,i)\}$. Refer to Figure~\ref{fig:MaxGapReduction}.
%For each $x_i \in X$, we add segment $s_i=\{(-x_M+x_i-1,i),(x_i,i)\}$ to $S$.
Finally, to avoid having a trivial stabbing strip (i.e., a stabbing halfplane), we add two extra segments $\{(\varepsilon, 0),(x_M,0)\}$ and $\{(-x_M,0),(-\varepsilon,0)\}$ (for some arbitrarily small $\varepsilon>0$). 

Recall that a stabbing strip is given by the two red segment endpoints through which the (vertical) bounding lines of the strip go.
By Observation~\ref{obs_init}, any stabbing strip of $S$ must have its left defining line through a negative $x$-coordinate and its right defining line through a positive $x$-coordinate.
Consider a stabbing strip with right defining line $x=x_i$, and let $x_j$ be the $x$-coordinate of the next consecutive endpoint in $S$.
Then the strip contains all right  endpoints in $[0,x_i]$, and misses all those in $[x_j,x_M]$.
Since the right endpoint of $s_j$ is not contained in the strip, its left endpoint must be $(-x_M+x_j-1,j)$.
By construction, left  endpoints appear, from left to right, in the same order as right  endpoints, thus the strip cannot contain any left endpoint to the left of $(-x_M+x_j-1,j)$, because the next left endpoint corresponds to $s_i$, which is already stabbed.
Therefore its left defining line must go through  $(-x_M+x_j-1,j)$.

Thus, for each $x_i\in X$ there is exactly one vertical stabbing strip associated with it. The boundary of such strip is given by two lines of the form $x=x_i$ and $x=-x_M-1+x_{j}$, and in particular it has width $x_M+1-(x_{j}-x_{i})$. 
Note that $(x_{j}-x_{i})$ is the gap between $x_i$ and the next consecutive number in $X$, $x_j$.
It follows that a strip that has minimum width maximizes the gap between its associated endpoint and the next consecutive number in $X$, thus corresponds to the maximum gap in $X$.
\end{proof}

If we assume that each strip is described by the endpoints from $S$ on each boundary line, then one can identify the narrowest strip in $O(n)$ time, and thus can also find the maximum gap in $X$ within the same running time.

\begin{corollary}
The problem of computing all combinatorially different stabbing strips for a
set $S$ of $n$ segments, assuming that each strip is described by one point from $S$ on each boundary line, is $\Omega(n\log n)$ in the algebraic decision tree model.
\end{corollary}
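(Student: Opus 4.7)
The plan is to derive the corollary directly from Theorem~\ref{thm:maxgap} by observing that, given an enumeration of all combinatorially different stabbing strips in the stated representation, one can extract the narrowest strip with only $O(n)$ additional work. Since Theorem~\ref{thm:maxgap} already establishes an $\Omega(n\log n)$ lower bound in the algebraic decision tree model for the narrowest strip problem, this reduction transfers the bound to the enumeration problem.

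Concretely, first I would invoke the reduction from \textsc{Maximum Gap} that was built in the proof of Theorem~\ref{thm:maxgap}: for a given instance $X$, produce the segment set $S$ of size $\Theta(n)$ whose narrowest vertical stabbing strip encodes the maximum gap of $X$. Second, I would recall from the discussion surrounding Theorem~\ref{theo_enumstrip} that the total number of combinatorially different stabbing strips is $O(n)$. Third, I would note that in the representation assumed by the corollary each reported strip is specified by the two points of $S$ that lie on its boundary lines, and that in the vertical-strip setting of the reduction the width of such a strip is simply the absolute difference of the $x$-coordinates of these two points, computable with one subtraction. Hence a single linear scan of the output list identifies the strip of minimum width and, via the reduction, solves \textsc{Maximum Gap}.

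Putting these pieces together, if the enumeration could be carried out in time $T(n)$, then \textsc{Maximum Gap} on $n$ numbers would be solvable in time $T(n)+O(n)$ in the algebraic decision tree model, which forces $T(n)=\Omega(n\log n)$.

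The main (essentially only) subtlety is the dependence on the output model: the argument works precisely because the stated representation lets one read off the width of each strip in constant time. With a less informative representation one would need an additional step to recover widths efficiently, but under the assumption stated in the corollary this issue does not arise, and the lower bound follows.
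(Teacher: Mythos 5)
Your proposal is correct and follows essentially the same route as the paper: the paper also derives the corollary from Theorem~\ref{thm:maxgap} by noting that, under the assumed representation, the narrowest strip (and hence the maximum gap of $X$) can be extracted from the enumerated list with $O(n)$ extra work. Your additional remarks (the $O(n)$ bound on the number of strips and the constant-time width computation) just make explicit what the paper leaves implicit.
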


Note that given that strips are particular cases of 3-rectangles and rectangles, the previous lower bound also applies to the problems in Sections~\ref{sec_3} and~\ref{sec_4}.

\subsection{Stabbing quadrants}\label{quadrants}

%The previous approach essentially partitioned the plane into three
%regions: the red region is guaranteed to be included in any
%solution, the blue region must be avoided by any solution, and the
%gray region is where the endpoints of the unclassified segments
%are remaining. If the cascading finishes without finding a contradiction,
%we conclude that each unclassified segment must have an endpoint
%in each of the connected components of the gray region. Thus, we
%can enlarge the red region to contain one of the connected
%components of the gray region to obtain a solution.

We now extend the cascading approach to stabbing quadrants (i.e., a wedge
formed by a horizontal and a vertical ray from a common point, the
\emph{apex} of the quadrant). There are four types of quadrants.
Without loss of generality, we concentrate on the bottom-right
type. Throughout this section, the term \emph{quadrant} refers to a bottom-right quadrant. Other types can be handled analogously.

For a segment $s=\{p,q\}$, let $Q(s)$ denote its
\emph{bottom-right quadrant}; that is, the quadrant with apex at
$(\max\{ x(p), x(q) \}, \min \{ y(p), y(q) \})$. See
Fig.~\ref{fig:quadrants}(a).

\begin{observation}\label{obs_initquad}
Any quadrant classifying a segment $s$ must contain $Q(s)$.
\end{observation}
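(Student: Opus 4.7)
The plan is to unpack the definitions and show, by a short coordinate computation, that a bottom-right quadrant which contains even just one of the two endpoints of $s$ must already contain the entire quadrant $Q(s)$; the classification hypothesis is actually stronger than needed.

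First I would fix notation: let $s=\{p,q\}$ and let $\mathcal{Q}$ be a bottom-right quadrant with apex $(a,b)$, so that $\mathcal{Q}=\{(x,y)\in\mathbb{R}^2 : x\ge a,\ y\le b\}$. By definition, $Q(s)$ is the bottom-right quadrant with apex $(x^*,y^*)$, where $x^*=\max\{x(p),x(q)\}$ and $y^*=\min\{y(p),y(q)\}$. Thus proving $Q(s)\subseteq\mathcal{Q}$ is equivalent to showing $a\le x^*$ and $b\ge y^*$.

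Next, since $\mathcal{Q}$ classifies $s$, it contains at least one of $p,q$; without loss of generality assume $p\in\mathcal{Q}$ (the other case is symmetric). From $p\in\mathcal{Q}$ we read off directly $x(p)\ge a$ and $y(p)\le b$. But then $x^*=\max\{x(p),x(q)\}\ge x(p)\ge a$ and $y^*=\min\{y(p),y(q)\}\le y(p)\le b$, which are exactly the two inequalities needed. Concluding, every point $(x,y)\in Q(s)$ satisfies $x\ge x^*\ge a$ and $y\le y^*\le b$, so $(x,y)\in\mathcal{Q}$, and therefore $\mathcal{Q}\supseteq Q(s)$.

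I do not anticipate any real obstacle: the statement is essentially a monotonicity property of bottom-right quadrants under the partial order given by their apices, and the proof reduces to two one-line coordinate comparisons. The only mild subtlety worth flagging is that the hypothesis ``exactly one endpoint is contained'' is not used; ``at least one endpoint is contained'' already suffices, so the observation in fact applies to any quadrant that contains at least one endpoint of $s$.
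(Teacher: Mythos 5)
Your proof is correct: the paper states this observation without proof, and your two-line coordinate comparison is exactly the immediate verification it leaves implicit (a closed bottom-right quadrant containing an endpoint of $s$ contains that endpoint's own bottom-right quadrant, hence $Q(s)$, whose apex is weakly to the right of and below either endpoint). Your remark that only ``at least one endpoint'' is needed is also accurate and is precisely what makes $Q(s)$ useful here, since one does not know in advance which endpoint the stabbing quadrant contains.
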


Given the set $S$ of segments, the \emph{bottom-right quadrant} of $S$, denoted by $Q(S)$, is the (inclusion-wise) smallest quadrant that contains $\cup_{s \in S} Q(s)$; see Fig.~\ref{fig:quadrants}(b).
%
%\emph{Smallest} here is taken with respect to inclusion.
%

%The following corollary will be crucial for our algorithm.

We now need the equivalent of Observation~\ref{obs_init} to create the initial partition of the plane into red, blue, and gray regions.

\begin{corollary}\label{cor:quadrant}
Any stabbing quadrant of $S$ must contain $Q(S)$.
%Moreover, no segment in $S$ can be totally to the left or totally above $Q(s)$.
\end{corollary}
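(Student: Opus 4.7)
The plan is to derive Corollary~\ref{cor:quadrant} essentially as a direct consequence of Observation~\ref{obs_initquad}, lifting it from the per-segment statement to a statement about the whole set $S$. The argument is very short, so the proof proposal is mostly about making sure each step is justified cleanly.

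First, I would invoke the hypothesis: let $\mathcal{R}$ be a stabbing quadrant of $S$. By definition, $\mathcal{R}$ contains exactly one endpoint of every $s\in S$, so in particular $\mathcal{R}$ classifies each $s\in S$. Applying Observation~\ref{obs_initquad} to each segment individually then yields $Q(s)\subseteq \mathcal{R}$ for every $s\in S$. Taking the union over all segments gives $\bigcup_{s\in S} Q(s) \subseteq \mathcal{R}$.

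Second, I would finish by comparing $\mathcal{R}$ with the minimal-containment quadrant $Q(S)$. Here I would point out that for bottom-right quadrants, containment is monotone and admits a unique smallest element above any finite set of points: the smallest bottom-right quadrant containing a set $T$ has its apex at $\bigl(\min_{p\in T} x(p),\,\max_{p\in T} y(p)\bigr)$, which is well defined because the apices of the finitely many quadrants $Q(s)$ give a finite set of candidate coordinates. Since $\mathcal{R}$ is itself a bottom-right quadrant containing $\bigcup_{s\in S}Q(s)$, and $Q(S)$ is by definition the smallest bottom-right quadrant with this property, we get $Q(S)\subseteq\mathcal{R}$, which is exactly the statement to prove.

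I do not foresee any real obstacle: once Observation~\ref{obs_initquad} is available, the only thing to be careful about is the minimality/well-definedness of $Q(S)$, and this follows from the simple coordinate description of the apex above. The corollary is really just the per-segment observation rephrased at the level of $S$.
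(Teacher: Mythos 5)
Your argument is correct and is exactly the route the paper intends: the corollary follows immediately from Observation~\ref{obs_initquad} applied to each segment (since a stabbing quadrant classifies every segment), together with the minimality of $Q(S)$ among bottom-right quadrants containing $\bigcup_{s\in S}Q(s)$, which the paper leaves implicit. Your extra care about the well-definedness of the smallest containing quadrant via its apex coordinates is a harmless elaboration of the same argument.
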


%
%Our algorithm is similar to the one for strips. It also maintains
%a red and a blue region (and its complement, the gray region). In
%this case, several segments will be directly moved to $W$, and a
%cascading procedure will be executed. Unlike in the previous case,
%the fact that the cascading procedure does not find a
%contradiction does not imply that a solution must exist. Thus, we
%must add a final phase to verify that the candidate solution, if
%any was found, is valid.
%

\begin{figure}[tb]
\begin{center}
\includegraphics{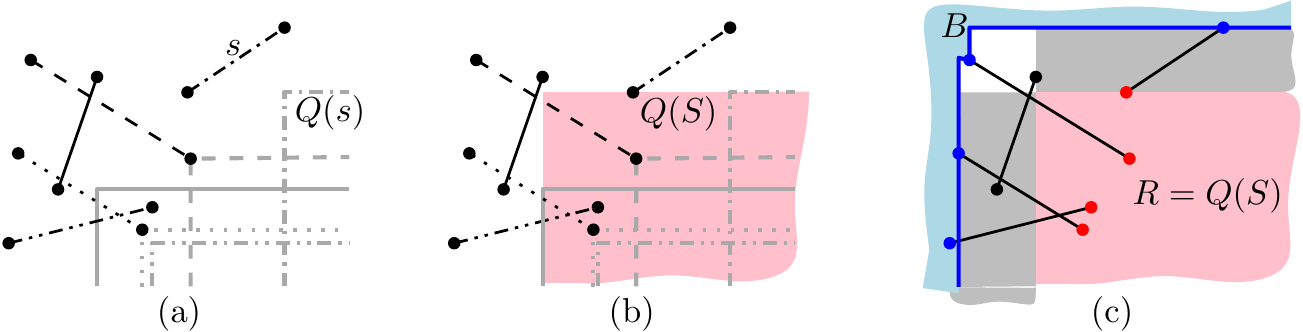}
\end{center}
\caption{(a) Five segments and their individual bottom-right
quadrants. (b) Bottom-right quadrant $Q(S)$ of the set of
segments $S$. (c) Initial classification given by $Q(S)$,
and the red, blue, and gray regions. Note that there is a region
(white in the figure) that cannot contain endpoints of $S$.}
\label{fig:quadrants}
\end{figure}

\paragraph{Regions}
The partition is as follows: the red region $R$ is always defined as the inclusionwise smallest quadrant that contains $Q(S)$ and all classified points. %Corollary~\ref{cor:quadrant} gives us a first
%candidate, thus we start with $R=Q(S)$. Note that after the
%cascading procedure, $R$ might grow, but it will always contain
%$Q(S)$.
At any point during the execution of the algorithm, let $a=(x_R,y_R)$ denote
the apex of $R$. Any blue point $b$ of a classified segment
forbids the stabber to include $b$ or any point above and to the
left of $b$ (i.e., in the top-left quadrant of $b$). Moreover, if
$b$ satisfies $y(b)\leq y_R$ or $x(b)\geq x_R$ a whole halfplane
will be forbidden (or, equivalently, blue). Thus, the
union of \new{the boundary of} such blue regions forms a staircase polygonal line (see Fig.~\ref{fig:quadrants}(c)). Initially,
we take the blue region defined by a staircase with a single point at $(-\infty,\infty)$.
We say that a point $p=(x,y)$ is in the gray region if it is not
in a red or blue region, and satisfies either $x> x_R$ or
$y< y_R$ (see Fig.~\ref{fig:quadrants}(c)). As before,
observe that the gray region is the union of two connected
components (which we call {\em right} and {\em down} components).
Note that, in this case, there are regions in the plane (which we
call \emph{white}) that do not belong to either red, blue, or gray
regions. However, our first observation is that no endpoint of an
unclassified segment can lie in the white region.

\begin{observation}\label{obs_nowhite}
A segment $s\in S$ containing an endpoint in the white region must
contain its other endpoint in the red region.
\end{observation}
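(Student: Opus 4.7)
The plan is to reduce the claim to simple coordinate comparisons between the endpoints of $s$ and the apex $a=(x_R,y_R)$ of the current red quadrant $R$. First I would identify what the white region actually is in coordinates. By definition, gray points satisfy $x> x_R$ or $y<y_R$ (and are not blue or red), and red points satisfy $x\ge x_R$ and $y\le y_R$. Therefore the white region consists exactly of points $p$ with $x(p)<x_R$ and $y(p)>y_R$ that are not covered by the blue staircase. In particular, any white endpoint $p$ satisfies $x(p)<x_R$ and $y(p)>y_R$.

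Next I would invoke the containment chain $Q(s)\subseteq Q(S)\subseteq R$: the first inclusion is the very definition of $Q(S)$, and the second follows from Corollary~\ref{cor:quadrant} together with the fact that $R$ is maintained as the smallest bottom-right quadrant containing $Q(S)$ and all classified red points. Now the apex of $Q(s)$ is $\bigl(\max\{x(p),x(q)\},\ \min\{y(p),y(q)\}\bigr)$; since this apex lies in $R$, we obtain
\[
\max\{x(p),x(q)\}\ \ge\ x_R \qquad\text{and}\qquad \min\{y(p),y(q)\}\ \le\ y_R.
\]

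Finally, I would combine these two inequalities with the coordinate description of the white region. From $x(p)<x_R\le\max\{x(p),x(q)\}$ it follows that $x(q)\ge x_R$, and from $y(p)>y_R\ge\min\{y(p),y(q)\}$ it follows that $y(q)\le y_R$. Hence $q$ lies in the closed bottom-right quadrant with apex $a$, that is, $q\in R$, which is exactly the conclusion of the observation.

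There is no genuinely hard step here; the whole argument is coordinate-chasing driven by the defining properties of $Q(S)$ and $R$. The only point that requires a little care is being explicit about the geometric characterization of the white region (it is the top-left part of $a$ not hidden behind the blue staircase), so that one may legitimately conclude $x(p)<x_R$ and $y(p)>y_R$ for every white point $p$. Once this is stated, the observation falls out immediately from the inclusion $Q(s)\subseteq R$.
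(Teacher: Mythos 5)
Your proof is correct and follows essentially the same route as the paper: both arguments rest on the containment $Q(s)\subseteq Q(S)\subseteq R$ and a comparison of the endpoint coordinates with the apex $(x_R,y_R)$. The only difference is presentational — you argue directly from the coordinate description of the white region, whereas the paper argues by contradiction with a short case analysis on where the other endpoint could lie — and both treatments gloss over degenerate endpoints lying exactly on the boundary lines $x=x_R$ or $y=y_R$ in the same way.
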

\begin{proof}
This fact follows from the definition of $Q(s)$: assume that there
exists a segment $s\in S$ that has an endpoint in the white region
and an endpoint in the right gray component. In particular, the
$y$-coordinate of both endpoints is larger than $y_R$. Thus, the
quadrant $Q(s)$ is not contained in $R$. However, recall that, by
construction, we have $Q(s)\subseteq Q(S)\subseteq R$; obtaining a
contradiction. Similar contradictions are found when the other
endpoint of $s$ is in the other gray component, or in the white or
blue regions.
\end{proof}

Initially we classify all segments that have one endpoint inside
$Q(S)$ in $W$ (and all the rest in $U$). As before, we
apply the cascading procedure until we find a contradiction (in which case we conclude that a
stabbing quadrant does not exist), or set $W$ eventually becomes empty. In this case we have a very strong characterization of the remaining unclassified segments.

\begin{lemma}\label{lemma_keyquadrant}
If the cascading procedure finishes without finding a
contradiction, each remaining segment in $U$ has an endpoint in
each of the gray components.
\end{lemma}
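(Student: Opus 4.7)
The plan is to argue by contradiction using the definitions of the regions together with Observation~\ref{obs_nowhite} and the inclusion $Q(s)\subseteq Q(S)\subseteq R$ that holds throughout the execution.

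First I would observe that, since the cascading procedure has terminated without emptying the red/blue zones, every remaining segment $s=\{p,q\}\in U$ must have \emph{both} of its endpoints outside the current red region $R$ and outside the current blue region: otherwise $s$ would have been moved from $U$ to $W$ during cascading. Combined with Observation~\ref{obs_nowhite} (no endpoint of an unclassified segment lies in the white region), this forces both $p$ and $q$ into the gray region, which has exactly the right and the down components.

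The core step is then to rule out the possibility that $p$ and $q$ lie in the same gray component. Let $a=(x_R,y_R)$ be the current apex of $R$, so that $R=\{(x,y):x\ge x_R,\;y\le y_R\}$. Recall that $R$ only grows during cascading, so $Q(s)\subseteq Q(S)\subseteq R$ still holds. Since $Q(s)$ is the bottom-right quadrant with apex $(\max\{x(p),x(q)\},\min\{y(p),y(q)\})$, the containment $Q(s)\subseteq R$ forces
\[
\max\{x(p),x(q)\}\ge x_R \qquad\text{and}\qquad \min\{y(p),y(q)\}\le y_R.
\]
I would then do a short case split: if both $p$ and $q$ lie in the right gray component, then $y(p),y(q)>y_R$ by definition of that component, so $\min\{y(p),y(q)\}>y_R$, contradicting the second inequality. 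Symmetrically, if both lie in the down gray component, then $x(p),x(q)<x_R$, so $\max\{x(p),x(q)\}<x_R$, contradicting the first inequality.

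The argument is otherwise direct; the one subtle point I expect to stress is that $R$ has only been \emph{grown} (never shrunk) during cascading, so the defining inclusion $Q(s)\subseteq R$ from Corollary~\ref{cor:quadrant} is preserved throughout, which is exactly what makes the apex comparison work at the moment cascading terminates. Everything else is a verification that the geometric zones are as described in the paragraph preceding Observation~\ref{obs_nowhite}.
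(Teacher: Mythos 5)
Your proof is correct and follows essentially the same route as the paper's: it reduces the claim to the invariant $Q(s)\subseteq Q(S)\subseteq R$ and derives the coordinate contradiction when both endpoints fall in the same gray component, exactly as the paper does via the argument in Observation~\ref{obs_nowhite}. Your version simply spells out the apex inequalities explicitly, which the paper leaves implicit.
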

\begin{proof}
As with the strip case, the cascading procedure cannot finish if a
segment in $U$ has an endpoint in either a red or a blue region. As
argued in the proof of Observation~\ref{obs_nowhite}, if both
endpoints lie in the same gray component, quadrant $Q(s)$ is
not contained in $R$, giving a contradiction.
%The same happens if $s$ has one (or both) endpoint in the white region.
\end{proof}

Thus, if no contradiction is found we can extend $R$ until it contains
one of the two gray components to obtain a stabber. This can be done because, by construction, each of the connected
components of the gray region forms a bottomless rectangle (i.e, the intersection of three axis aligned halfplanes) that
shares a corner with the apex of $R$.

As in the strip case, we can use the following approach to report all combinatorially
different quadrants in an analogous fashion to Theorem~\ref{theo_strip}: any stabbing quadrant will either completely contain one of the two gray components, or it will contain at least a segment of each of the two components.
%by repeatedly applying the cascading procedure to the segments whose endpoints to $R$ are closest.

\begin{theorem}\label{theo_allquadrants}
All the combinatorially different stabbing quadrants for a
set $S$ of $n$ segments can be computed in $O(n\log n)$ time.
\end{theorem}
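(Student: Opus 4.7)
My plan is to mimic the structure used for strips in Theorem~\ref{theo_enumstrip}. First run the cascading procedure of Section~\ref{quadrants}: if it finds a contradiction there is no stabber, and otherwise by Lemma~\ref{lemma_keyquadrant} each remaining $s\in U$ has one endpoint in the right gray component and the other in the down gray component. At each iteration I would identify two ``extremal'' unclassified endpoints: $p_\tau$, the right-gray endpoint of smallest $y$-coordinate among segments in $U$, and $q_\beta$, the down-gray endpoint of largest $x$-coordinate.

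Any stabbing quadrant consistent with the current partition falls into at least one of three categories: (i) it contains the entire right gray component; (ii) it contains the entire down gray component; or (iii) both $p_\tau$ and $q_\beta$ are classified as red. Since whether a right-gray point is in the red region is controlled only by the $y$-coordinate of the apex, the unclassified right-gray endpoints, sorted by $y$, must split into a red prefix followed by a blue suffix in any stabber; hence if any of them is red then $p_\tau$ must be red, which together with the symmetric argument for the down-gray side shows that case (iii) applies whenever (i) and (ii) do not. Cases (i) and (ii) each produce exactly one stabber that can be emitted in $O(\log n)$ time once the current state is known; for (iii) I would move $s_\tau$ and $s_\beta$ from $U$ to $W$ with their chosen endpoints colored red, invoke cascading, and repeat.

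For efficiency I would maintain the unclassified endpoints in two balanced binary search trees (sorted by $x$ and by $y$), the apex $(x_R,y_R)$ of the red quadrant, and the blue region as a staircase in a balanced BST keyed by $x$-coordinate supporting insertion, deletion, and dominance queries in $O(\log n)$ time each. Since every segment is classified at most once throughout the execution, the total cost of cascading and updates is $O(n\log n)$; and since each outer iteration classifies at least two further segments, there are $O(n)$ iterations overall, each emitting at most two stabbers plus forcing two more classifications. The main obstacle, compared with the strip case, is that the blue region here can have linear complexity: one must be careful that each cascading step interacts with the staircase only through $O(\log n)$-time queries, and that inserting a new blue point into the staircase (possibly deleting several dominated staircase vertices) is amortized against the segments being deleted, so that the overall bound remains $O(n\log n)$.
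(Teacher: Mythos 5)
Your proposal is correct and follows essentially the same route as the paper's proof: the same trichotomy (report the stabber containing each gray component entirely, or force red the endpoint in each component closest to the red region and cascade again) drives the enumeration, and the cost analysis --- $O(\log n)$ per region update, each segment classified only once across all cascades, explicit handling of the linear-size blue staircase --- is the one the paper gives. The only (immaterial) difference is in the supporting data structure: the paper uses a dynamic priority search tree for the $U$-to-$W$ range queries, whereas you use one-dimensional search trees, which indeed suffices because unclassified endpoints lie only in the two gray components (Lemma~\ref{lemma_keyquadrant} and Observation~\ref{obs_nowhite}), where every red or blue growth event reduces to a single-coordinate threshold --- just make sure to keep a separate tree per gray component (keyed by $y$ for the right one and by $x$ for the down one) so the reported points are exactly the affected ones.
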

\begin{proof}
The proof of correctness is analogous to the strip case, thus we argue
about the running time. The red and gray regions have constant
complexity, and so they can be updated in constant time. The
boundary of the blue region can be determined by $\Omega(n)$
points. However, notice that the points in the staircase are
sorted in both the $x$-coordinates and the $y$-coordinates. Thus,
we can insert a point in the blue region in $O(\log n)$ time if
the points in the blue region are stored in a sorted fashion. Once
we have inserted a new point, we check if this increase makes the
red and blue regions intersect. Since we only need to compare
the upper left quadrant defined by the new point with the red
region, this can be done in constant time. Overall, we can update the
red, blue and gray regions in $O(\log n)$ time.

Now we need an efficient method to update the segments in $W$
whenever the red or blue region changes. For that purpose we can
use any data structure suitable for range searching with a
quadrant. For example, one based on priority search
trees~\cite{McC85} will suffice: given a bottomless rectangular
query region, after $O(n\log n)$ preprocessing time and using
$O(n)$ space, we can report  in $O(\log
n+k)$ time all segments containing an endpoint in
the query region (or report that the region is empty), where $k$ is the number of segments within that
region. Note that the structure is dynamic and can handle both
deletions and insertions in $O(\log n)$ time.

Whenever a point is colored, its corresponding region is increased with
a rectangular area (which may degenerate to a halfplane or bottomless
rectangle). Thus, we can find the segments that contain at least one
endpoint in the new segments with a rectangular range searching query. Each time we
report the segment, we insert it into $W$, delete it from $U$ (and from the range searching data structure, so it is not reported again). Thus, all
queries will be computed in total $O(n\log n)$ time and $O(n)$
space.

We can report all stabbing quadrants using the same approach as in the strip case: apply the cascading procedure until no more points remain in $W$, report the two stabbers that completely contain one of the gray components, or classify the segments in both components whose endpoints are closest to the red region, and further apply the cascading procedure.
\end{proof}

Similarly to the case of stabbing strips, we can show that our algorithm is asymptotically optimal if regions have to be represented by endpoints on the boundary.

\begin{theorem}
The problem of computing all combinatorially different stabbing quadrants for a
set $S$ of $n$ segments, assuming that each quadrant is described by one endpoint from $S$ on each boundary line,  is $\Omega(n\log n)$ in the algebraic decision tree model.
\end{theorem}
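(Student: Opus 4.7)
The strategy I would use mirrors that of Theorem~\ref{thm:maxgap}: reduce from \textsc{Maximum Gap} to the problem of reporting all combinatorially different stabbing (say, bottom-right) quadrants, and conclude by the known $\Omega(n \log n)$ lower bound for \textsc{Maximum Gap} in the algebraic decision tree model. Given positive reals $X=\{x_1,\dots,x_n\}$ with $x_M=\max_i x_i$, I would build a set $S$ of $n+O(1)$ segments that mimics the strip construction: for each $x_i$ put a segment $s_i$ whose ``right'' endpoint has $x$-coordinate $x_i$ and whose ``left'' endpoint has $x$-coordinate $-x_M+x_i-1$, with $y$-coordinates slightly perturbed (e.g., $y=-i$ and $y=-i-\tfrac12$) to respect the non-horizontal/non-vertical assumption. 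As in the strip reduction, the left endpoints appear in the same left-to-right order as the right endpoints.

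The conceptual difference between quadrants and strips is that a quadrant has a horizontal boundary line as well, so I must also control where the horizontal boundary $y=b$ of a stabbing quadrant can lie. The plan is to add a small constant number of \emph{guard} segments that force $b$ into a narrow window in which every endpoint of the constructed segments $s_i$ lies below $b$. For instance, one pair of guard segments can be placed so that each must contribute its ``high'' endpoint to the quadrant, which pins $b$ above the whole vertical range of the $s_i$'s; another pair can be placed so that any stabber must have its vertical boundary in the interval $(-x_M-1,\min_i x_i]$. Once these conditions are enforced, the set of valid apex positions degenerates (up to combinatorial equivalence) to a one-dimensional problem in $a$, and the analysis of which endpoints are red/blue reduces to the strip situation.

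With this setup, the combinatorially different stabbing quadrants are in bijection with choices of the vertical boundary through one of the values $x=x_i$: by Observation~\ref{obs_init} (applied to the induced one-dimensional problem) the vertical boundary of any stabber passes through some right endpoint $(x_i,\cdot)$, and as in the strip argument the left endpoint through which the opposite constraint is fixed must be the left endpoint of the segment $s_j$ whose $x_j$ is the next consecutive value in sorted order after $x_i$. Thus exactly $n$ combinatorially different quadrants arise, and each one is represented (by assumption) by two endpoints of $S$ on its boundary lines, from which one can read off a consecutive pair $(x_i,x_j)$ and the gap $x_j-x_i$ in constant time.

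Consequently, after reporting all stabbing quadrants, we can scan the $\Theta(n)$ output tuples in $O(n)$ time and return the largest gap, solving the given \textsc{Maximum Gap} instance. An algorithm for reporting all combinatorially different stabbing quadrants running in $o(n\log n)$ time would therefore yield an $o(n\log n)$ algorithm for \textsc{Maximum Gap}, contradicting the known lower bound, so the claimed $\Omega(n\log n)$ bound holds. The main technical obstacle is the design of the guard segments: one must check that they introduce no spurious stabbers, that no guard becomes trivially unstabbable, and that the non-horizontal/non-vertical and distinct-coordinate conventions of the paper are preserved; this is a careful but routine placement argument, analogous to the two extra segments added at the end of the strip reduction.
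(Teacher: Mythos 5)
There is a genuine gap, and it lies exactly where you try to ``neutralize'' the horizontal boundary. A quadrant has only \emph{one} vertical and \emph{one} horizontal boundary ray. If your guard segments force the horizontal boundary $y=b$ to lie above every endpoint of the gadget segments $s_1,\dots,s_n$, then membership of those endpoints in the quadrant depends solely on the single threshold $x\ge a$. For your gadget (right endpoints at $x_i>0$, left endpoints at $-x_M+x_i-1<0$), every choice $a\in(-1,\min_i x_i]$ gives the \emph{same} classification: all right endpoints red, all left endpoints blue. So, irrespective of the gap structure of $X$, the main segments contribute only $O(1)$ combinatorially different stabbers, and an algorithm reporting combinatorially different quadrants reports a single representative whose boundary description carries no information about consecutive pairs $(x_i,x_j)$. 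Your claimed bijection between stabbers and the values $x_i$, and the step ``the left endpoint through which the opposite constraint is fixed must be that of $s_j$'', are artifacts of the strip setting, where there are two parallel boundary lines; here, once the horizontal boundary is parked outside the gadget, there is no opposite constraint left, the reduction to ``the strip situation'' fails, and Maximum Gap cannot be read off the output. (Invoking Observation~\ref{obs_init} for the ``induced one-dimensional problem'' does not help, since the induced problem is a halfplane problem, not a strip problem.)

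The paper resolves precisely this difficulty in the opposite way: instead of making the horizontal boundary irrelevant, it keeps \emph{both} boundary rays active. All gadget segments from the construction of Theorem~\ref{thm:maxgap} are placed collinearly on the $x$-axis and the whole picture is rotated by $45^\circ$; on the supporting line a bottom-right quadrant then cuts out a bounded interval, with one endpoint determined by the vertical ray and one by the horizontal ray, which restores the two-threshold, strip-like structure. Each stabbing quadrant then corresponds to a consecutive pair of values of $X$, the width of its (two equal-width) gray regions equals $1/\sqrt{2}$ times that gap, and the maximum gap is recovered in $O(n)$ time from the reported descriptions. If you want to repair your argument, the fix is essentially this collinearity-plus-rotation idea rather than guard segments.
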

\begin{proof}
The result is obtained by using a similar construction to the one used for Theorem~\ref{thm:maxgap} (see Fig. \ref{fig:MaxGapReduction}), with two simple modifications.

The first one is making all horizontal segments collinear, placing all of them on the $x$-axis.
(This is done to simplify the arguments below, but larger distances between consecutive segments also work as long as they are small enough.)

The second one is rotating the coordinate system counterclockwise by $45^\circ$.
In this way, there is a 1-to-1 correspondence between stabbing strips in the construction of Theorem~\ref{thm:maxgap} and stabbing quadrants in the modified version.
However, since quadrants are unbounded in two directions, the concept of narrowest strip does not naturally extend to quadrants. Instead, we look at the width of the gray regions induced.

Recall that for quadrants the gray region has two connected components, and each one is described as the intersection of three quadrilaterals. More importantly, in the problem instance we construct, both regions will have the same width. 

Specifically, if the upper boundary of the red region has an endpoint of $s_i$, we know that the endpoint in the associated blue region must be from $s_j$, where $j$ is the index of the value that goes after $i$ in $X$. Since we rotated the endpoints $45^\circ$, the width of the gray region will be equal to $1/\sqrt{2}$ times the difference between $x_i$ and $x_j$. 
%
%
%The modified construction has the property that the width of the gray regions is proportional to the gap between the corresponding consecutive numbers. 
%In particular, for $\delta=0$, the width of the gray regions equals  $1/\sqrt{2}$ times the gap.
Therefore the stabbing quadrant maximizing the width of its gray regions gives the maximum gap in $X$.

Finally, note that if one is given all combinatorially different stabbing quadrants for $S$, assuming that each of them is described by one endpoint from $S$ on each boundary line, then the one with widest gray regions can be identified in $O(n)$ time. 
The reduction follows.
\end{proof}

\section{Stabbing with three halfplanes}\label{sec_3}
%\section{3-Sided Rectangles}\label{sec_3}
%\rodrigo{New version of this section, please proofread!}

We now consider the case in which the stabber is defined as the intersection
of three halfplanes. As in the quadrant case, it suffices to consider those
of fixed orientation. Thus, throughout this section, a $3$-\emph{rectangle} refers to a rectangle that is missing the lower boundary edge, and that extends infinitely towards the negative $y$-axis (also called \emph{bottomless rectangle}).
As in the previous cases, we solve the problem by partitioning the plane
into red-blue-gray regions.
However, in this case the algorithm will need an additional sweeping phase.

\subsection{Number of different stabbing 3-rectangles}
First we analyze the number of combinatorially different stabbing 3-rectangles that a set $S$ of $n$ segments can have.
This analysis will lead to an efficient algorithm to compute them.

\begin{figure}[h]
\centering
\includegraphics[width=0.25\textwidth]{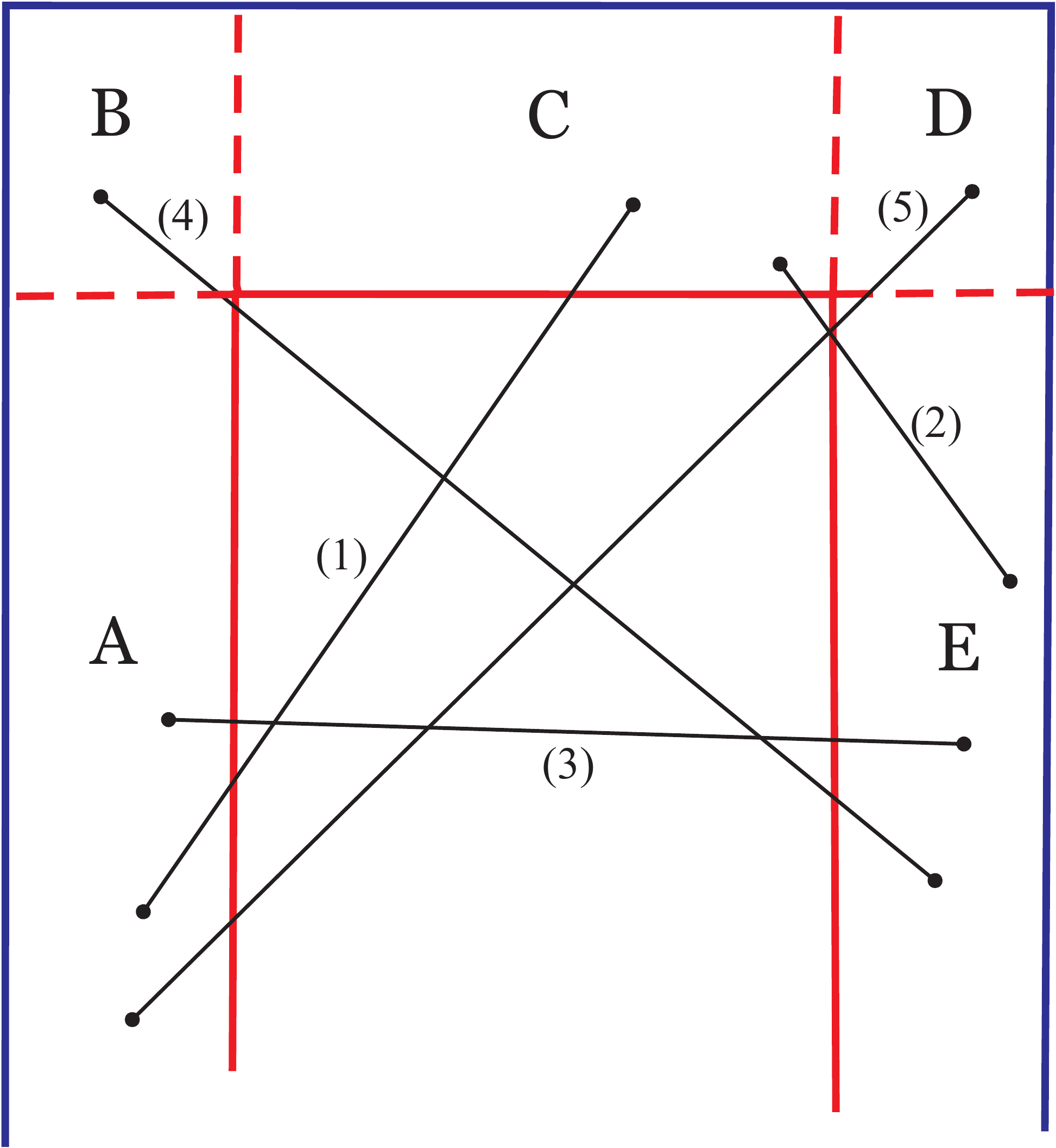}
\caption{Types of segments after cascading.}
\label{fig:SegmentTypes}
\end{figure}

We start by defining a region that must be included in any stabbing $3$-rectangle. Recall that $s_b$ is the segment of $S$ with highest bottom endpoint, and $q_b$ is its bottom endpoint. Analogously, we define $s_{r}$ and $s_\ell$ as the segments with leftmost right endpoint and rightmost left endpoint, respectively.
Let $p_{\ell}$ be the left endpoint of $s_{\ell}$ and
let $p_r$ be the right endpoint of $s_{r}$.
See Fig.~\ref{fig:3rect}(a).
Finally, we define lines $L_\ell$, $L_r$ as the vertical lines passing through $p_\ell$ and $p_r$, respectively.
Analogously, $L_b$ is the horizontal line passing through $q_b$.

\begin{figure}[tb]
\begin{center}
\includegraphics{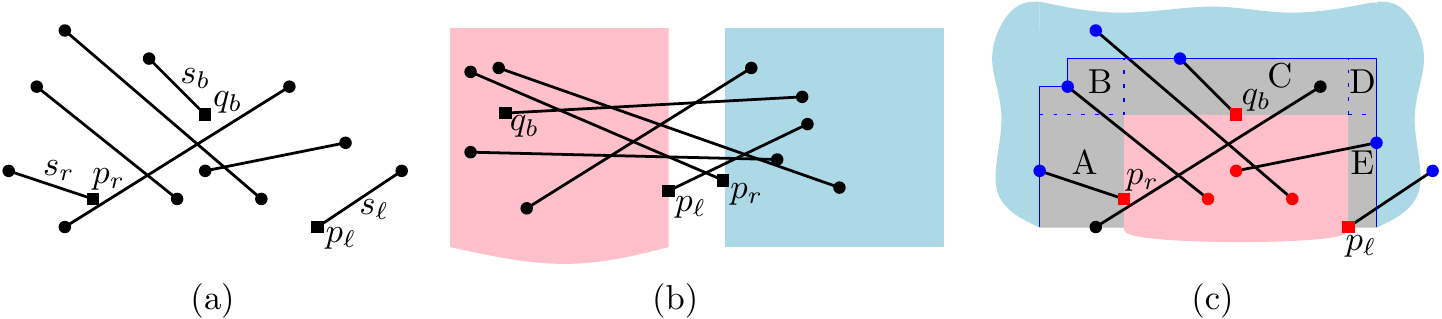}
\end{center}
\caption{(a) Example with points $q_b$, $p_{\ell}$ and $p_r$
highlighted as squares. (b) Any stabber that contains $p_\ell$ and avoids $p_r$ contains the left endpoints of all segments of $S$, and is combinatorially equivalent to the stabbing halfplane $x\leq x(p_\ell)$.
(c) Red and blue regions after the initial cascading procedure has finished, and partition of the gray region into subregions $A$, $B$, $C$, $D$, $E$.}
\label{fig:3rect}
\end{figure}

\begin{lemma}\label{lem_cascadinit3}
Any non-trivial stabbing $3$-rectangle $\mathcal{R}$ for $S$ must contain the intersection points of lines $L_{\ell}$, $L_r$ and $L_b$.
\end{lemma}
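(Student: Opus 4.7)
The lemma asserts that two specific points, namely $L_\ell\cap L_b = (x(p_\ell), y(q_b))$ and $L_r\cap L_b = (x(p_r), y(q_b))$, must lie in every non-trivial stabbing $3$-rectangle (the third possible pairwise intersection is empty since $L_\ell$ and $L_r$ are both vertical). My plan is to write $\mathcal{R}$ as the set of points satisfying $X_L\le x\le X_R$ and $y\le T$, and to establish the three inequalities
\[T\ge y(q_b),\qquad X_L\le x(p_\ell)\le X_R,\qquad X_L\le x(p_r)\le X_R.\]
These three inequalities together place both intersection points inside $\mathcal{R}$, so the lemma follows directly.

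Two of these facts are essentially immediate and do not require non-triviality. Because $\mathcal{R}$ must contain some endpoint of $s_b$ and both endpoints of $s_b$ have $y$-coordinate at least $y(q_b)=y_b$, we get $T\ge y(q_b)$. Likewise, $\mathcal{R}$ must contain an endpoint of $s_\ell$, and both endpoints of $s_\ell$ have $x$-coordinate at least $x(p_\ell)$ (the left endpoint of $s_\ell$ is $p_\ell$ itself, and the right endpoint sits strictly to the right of it), which forces $X_R\ge x(p_\ell)$. A symmetric observation applied to $s_r$ yields $X_L\le x(p_r)$.

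The substantive part of the argument is to show $X_L\le x(p_\ell)$ (and the mirror inequality $X_R\ge x(p_r)$), which is where the hypothesis of non-triviality enters. I would argue by contradiction: suppose $X_L > x(p_\ell)$. Since $x(p_\ell)$ is, by definition, the maximum $x$-coordinate attained by any left endpoint of $S$, no left endpoint can lie in $\mathcal{R}$. Because $\mathcal{R}$ stabs every segment, it must therefore contain the right endpoint of every segment. Now consider the halfplane $H:=\{(x,y):x\ge X_L\}$. It contains every right endpoint (they all lie in $\mathcal{R}\subseteq H$) and no left endpoint (each has $x$-coordinate at most $x(p_\ell) < X_L$). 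Hence $H$ is a stabber combinatorially equivalent to $\mathcal{R}$ but defined by only one halfplane, contradicting non-triviality. The inequality $X_R\ge x(p_r)$ is obtained by the symmetric argument using the halfplane $\{(x,y):x\le X_R\}$.

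The only real obstacle is noticing precisely \emph{where} non-triviality has to be invoked: exactly to exclude the situation in which one of the two vertical boundaries of $\mathcal{R}$ already separates all left endpoints from all right endpoints, leaving the other two halfplanes of $\mathcal{R}$ with no classifying role. Once this is identified, the remaining work is a one-line coordinate comparison against the extremal values that define $L_\ell$, $L_r$, and $L_b$.
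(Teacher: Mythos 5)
Your proof is correct and takes essentially the same route as the paper's: the stabbing of $s_b$, $s_\ell$, and $s_r$ yields the constraints $T\ge y_b$, $X_R\ge x(p_\ell)$, $X_L\le x(p_r)$, and non-triviality is used exactly where the paper uses it, namely to exclude a vertical boundary of $\mathcal{R}$ separating all left endpoints from all right endpoints (which would make $\mathcal{R}$ combinatorially equivalent to a single vertical halfplane stabber). The only difference is presentational: by working directly with the coordinates $X_L,X_R,T$ and the halfplanes $\{x\ge X_L\}$, $\{x\le X_R\}$, you avoid the paper's case distinction between $x(p_r)\le x(p_\ell)$ and $x(p_r)>x(p_\ell)$, a mild streamlining of the same argument.
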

\begin{proof}
This proof is the analogous to that of Observation~\ref{obs_init} but for 3-rectangles. First we prove that $\mathcal{R}$ must contain some point of $L_b$. Indeed, recall that $\mathcal{R}$ is a $3$-rectangle, hence if it has empty intersection with $L_b$, then it must be contained in the lower halfplane defined by $L_b$. In particular, it cannot contain any endpoint of $s_b$, reaching a contradiction.

Now, we distinguish between two cases depending on the respective locations of $p_\ell$ and $p_r$. First consider the case in which $x(p_r)\leq  x(p_\ell)$. As in Observation~\ref{obs_init}, we can show that $\mathcal{R}$ must contain points of both $L_\ell$ and $L_r$. Otherwise, it would not contain an endpoint of either $s_\ell$ or $s_r$, respectively. That is, $\mathcal{R}$ must intersect the three lines $L_b$, $L_\ell$, and $L_r$. Moreover, since $\mathcal{R}$ is a 3-rectangle, it must contain the intersection points of the three lines.

It remains to consider the case in which $x(p_r)>x(p_\ell)$. Recall that in this case two trivial vertical halfplane stabbers exist (halfplanes $x\leq x(p_\ell)$ and $x \geq x(p_r)$). As in Observation~\ref{obs_init}, we show that $\mathcal{R}$ cannot avoid both $L_r$ and $L_\ell$: if it would have empty intersection with both lines, then it would be completely to the right, left, or in between these lines.
Hence it would avoid $s_r$, $s_\ell$ or both $s_r$ and $s_\ell$, a contradiction.
Thus, $\mathcal{R}$ must have nonempty intersection with either $L_r$ or $L_\ell$. We claim that if it intersects with exactly one of them, then $\mathcal{R}$ contains the same points of $S$ as one of the two trivial halfplane stabbers.
Indeed, assume for the sake of contradiction that $\mathcal{R}$ intersects with $L_\ell$ but has empty intersection with $L_r$. In this case, $\mathcal{R}$ cannot contain any point in the right halfplane of $L_r$, see Fig.~\ref{fig:3rect}(b).
By definition of $s_\ell$ and $s_r$, each segment of $S$ must contain an endpoint on or to the right of $L_r$ (none of which can be contained in $\mathcal{R}$), and another endpoint on or to the left of $L_\ell$. In particular, if $\mathcal{R}$ does not intersect with $L_r$, it must contain all endpoints on or to the left of $L_\ell$, and is combinatorially equivalent to the trivial halfplane stabber $x\leq x(p_{\ell})$. The case in which $\mathcal{R}$ intersects with $L_r$ and avoids $L_\ell$ is analogous.

Therefore, we conclude that  $\mathcal{R}$ must intersect at least one point from each of $L_b$, $L_\ell$, and $L_r$, and thus it must contain the intersection points of the three lines.
\end{proof}

%\begin{observation}\label{obs_cascadinit3}
%Let $S$ be a set of $n$ segments without a stabbing halfplane, strip or quadrant. Any stabbing $3$-rectangle must contain points $p_{\ell}$, $p_r$, and $p_t$.
%\end{observation}

The result above gives a way to initialize the red region (as the region determined by lines $L_b$, $L_\ell$, and $L_r$) to start the usual cascading procedure (whereas the blue region is initialized as empty). Segments with an endpoint in the red region are placed in $W$ (note that \new{it may be} that no such segment exists), and the usual cascading procedure is executed. If no contradiction is found, the classified segments define a red and a blue region that must be included or avoided in any solution, respectively.  Specifically, the red region is the inclusionwise smallest 3-rectangle that contains all points classified as red (this region must contain at least the intersection points of Lemma~\ref{lem_cascadinit3}). The blue region is the set of points whose inclusion would force a 3-rectangle to contain some point that has already been classified as blue. As usual, the area that is neither red nor blue is the \emph{gray region}. Once the cascading procedure has finished, all remaining unclassified segments must have both of their endpoints in the gray region.

It will be convenient to distinguish between different parts of the gray region, depending on their position with respect to the red region.
We differentiate between five regions, named A,B,C,D,E, as depicted in Fig.~\ref{fig:3rect}(c). These five regions are obtained by drawing horizontal and vertical lines through the two corners of the red region.

We say that the \emph{type} of a segment $s$ is XY, for $\text{X,Y}\in \{ A,B,C,D,E \}$ if $s$ has one endpoint in region X and the other endpoint in region Y. The next lemma shows that, after (a successful) cascading, there are only a few possible types.

\begin{lemma}\label{lem_segtypes}
Any unclassified segment after the cascading procedure is of type in $AC$, $AD$, $AE$, $BE$, or $CE$.
\end{lemma}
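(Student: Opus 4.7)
The plan is a short case analysis over the $15$ unordered pairings $XY$ with $X,Y\in\{A,B,C,D,E\}$, using three simple invariants of the post-cascading red 3-rectangle $R$. Let $x_L$, $x_R$, and $y_{top}$ denote the $x$-coordinates of the left and right edges of $R$ and the $y$-coordinate of its top edge. Since $R$ only grows during cascading and initially has these values equal to $x(p_r)$, $x(p_\ell)$, and $y_b$, we have $x_L\le x(p_r)$, $x_R\ge x(p_\ell)$, and $y_{top}\ge y_b$ throughout. The three lines $x=x_L$, $x=x_R$, $y=y_{top}$ partition the plane into six pieces; $R$ is one of them, and from Figure~\ref{fig:3rect}(c) the other five are $A$ (left of $R$, $y\le y_{top}$), $B$ (above-left of $R$), $C$ (directly above $R$), $D$ (above-right of $R$), and $E$ (right of $R$, $y\le y_{top}$).

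From these invariants I would extract three facts about individual endpoints. First, any endpoint with $x<x_L$ must be the left endpoint of its segment, since otherwise it would be a right endpoint with $x<x(p_r)$, contradicting the definition of $p_r$. Symmetrically, any endpoint with $x>x_R$ is the right endpoint of its segment. Third, any endpoint with $y>y_{top}$ must be the upper endpoint of its segment, because every lower endpoint satisfies $y\le y_b\le y_{top}$.

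The case analysis is then immediate. Regions $A$ and $B$ both lie in the half-plane $x<x_L$, so a segment of type $AA$, $AB$, or $BB$ would have two left endpoints and cannot exist; similarly, $D$ and $E$ lie in $x>x_R$, which rules out $DD$, $DE$, $EE$; and $B$, $C$, $D$ all lie in $y>y_{top}$, so $BB$, $BC$, $BD$, $CC$, $CD$, $DD$ are impossible (two upper endpoints). The union of the forbidden types is $\{AA,AB,BB,BC,BD,CC,CD,DD,DE,EE\}$, and its complement in the $15$ unordered pairings is precisely $\{AC,AD,AE,BE,CE\}$.

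The main obstacle, and the only place requiring care, is fixing the correspondence between the labels of Figure~\ref{fig:3rect}(c) and the five geometric pieces cut out by the lines $x=x_L$, $x=x_R$, $y=y_{top}$; once this matching is in place, the result follows by inspection and there is no need to invoke the blue region or any cascading behaviour beyond the monotone growth of $R$. In particular, since the lemma only asserts the direction ``unclassified $\Rightarrow$ one of the five listed types'', there is no need to argue that the remaining five types are actually realizable by some unclassified segment.
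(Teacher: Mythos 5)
Your proof is correct and takes essentially the same route as the paper: both arguments rule out any unclassified segment lying entirely to the left of, to the right of, or above the red region by contradiction with the extremal endpoints (leftmost right endpoint $p_r$, rightmost left endpoint $p_\ell$, and highest bottom endpoint $q_b$), which leaves exactly the types $AC$, $AD$, $AE$, $BE$, $CE$. Your bookkeeping is in fact slightly more precise than the paper's wording, which attributes the three contradictions to $s_\ell$, $s_t$, $s_r$ where $s_r$, $s_b$, $s_\ell$ are the relevant segments; the only nitpick is that the initial red region has $x_L=\min(x(p_\ell),x(p_r))$ and $x_R=\max(x(p_\ell),x(p_r))$, which equals your stated initial values only when $x(p_r)\le x(p_\ell)$, though the inequalities $x_L\le x(p_r)$, $x_R\ge x(p_\ell)$, $y_{top}\ge y_b$ that you actually use hold in both cases.
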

\begin{proof}
The claim follows from the fact that there cannot be a segment completely to the right, left or above the red region. More formally, assume that there exists a segment $s\in S$ of type $XY$, for $X,Y\in \{A,B\}$. By Lemma~\ref{lem_cascadinit3}, we know that $p_\ell$ is classified as red, and in particular both endpoints of $s$ are to the left of $p_\ell$. However, this contradicts with the definition of $s_\ell$. Similarly, segments of type $XY$, for $X,Y\in \{B,C,D\}$ would give a contradiction with $s_t$, and with $s_r$ for $X,Y\in \{D,E\}$.
\end{proof}

\iffalse
Namely, any segment with both endpoints to the left, to the right, or above the limits of the red region would imply that the cascading process did not finish.
We analyze here the cases of having the whole segment in region A, and in region B. The other cases are symmetric.

If there is a segment $s$ completely contained in region A, then any 3-rectangle that contains the red region and one endpoint of $s$ will contain the rightmost endpoint of $s$.
Therefore $s$ should have been classified during the cascading process: its rightmost endpoint should be red, and the leftmost blue.

For a segment $s$ totally inside region $B$, there are two cases to consider.
If the line through $s$ has negative slope, then the situation is the same as in the previous case: any 3-rectangle containing the red region will have to contain the rightmost endpoint of $s$.
If the line through $s$ has positive slope, then any 3-rectangle containing the red region must contain the bottom-right quadrant of $s$, $Q(s)$, implying the red region should have been larger.

\rodrigo{TODO: verify cascading process is properly explained for these cases.}
\fi

\begin{figure}[tb]
\centering
\includegraphics{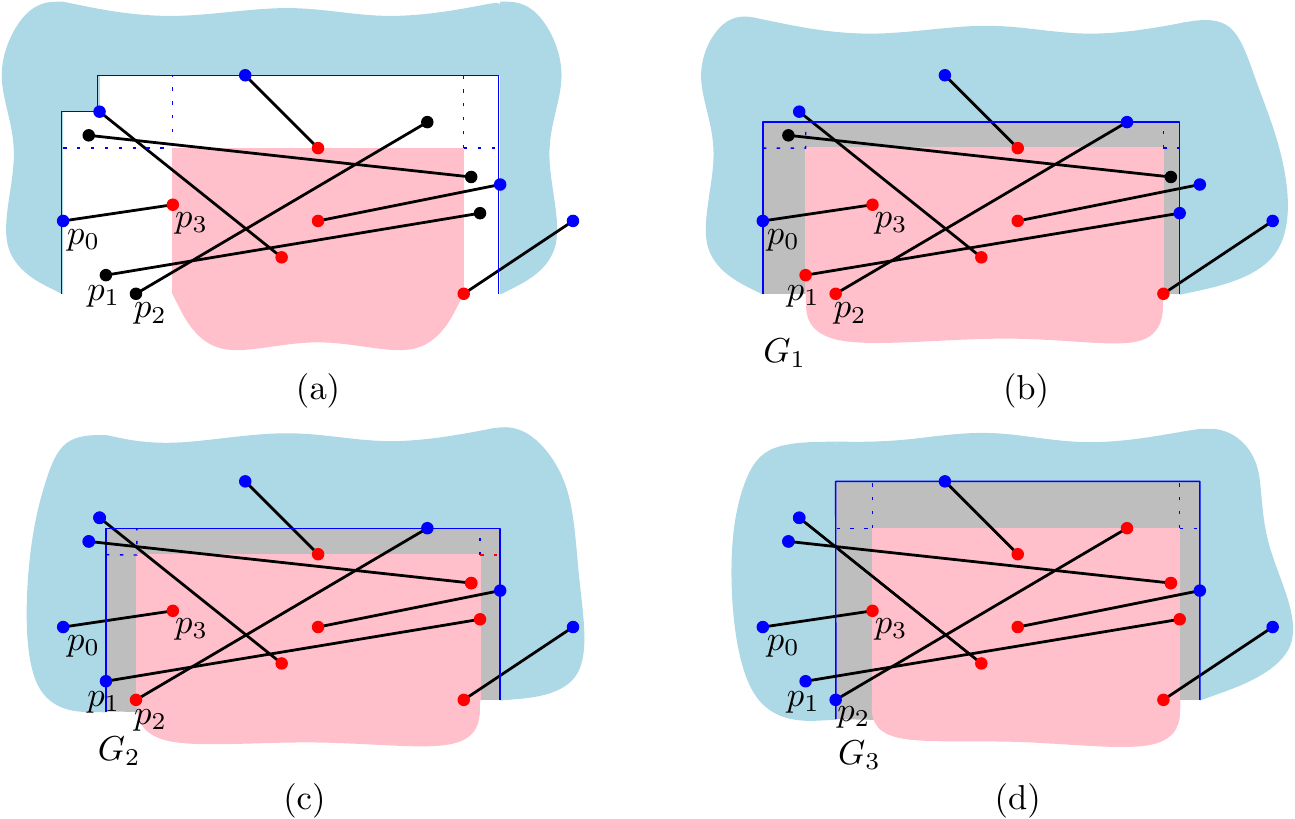}
\caption{(a) Initial situation; region A contains points $\{p_1,p_2\}$. (b)-(d): sequence of gray regions $G_1, G_2, G_3$.}
\label{fig:GiRegions}
\end{figure}

Now, consider the red and blue regions obtained after cascading.
Let $p_1,\dots,p_k$ be the endpoints of segments of $S$ inside region A, as they appear from left to right.
In addition, we define $p_0$ as the blue point defining the blue boundary of region A, and  $p_{k+1}$ as the red point defining the red boundary of region $A$.
See Fig.~\ref{fig:GiRegions}(a).
Let $G_i$, for $1 \leq i \leq k+1$, be the gray region obtained after classifying points $p_1,\dots,p_{i-1}$ as blue, $p_i,\dots,p_k$ as red, and performing a cascading procedure (see Fig.~\ref{fig:GiRegions}(b-d)). If any of those cascading operations results in a contradiction being found, we simply set the corresponding $G_i$ to be empty.

Observe that, if for some $i$,  $G_i\neq \emptyset$, then any unclassified segment (i.e., segment with both endpoints in $G_i$) must be of type BE or CE.

% %% Rodrigo: I found the definition of ``compatible'' not needed.
%We say that a classification of the remaining segments is \emph{compatible with $G_i$} if it is compatible with the classification of segments already classified.\matias{This definition is a bit strange. I thin we can rephrase it as "if it classifies segments $p_,\ldots, p_{i-1}$ as blue and $p_i,\dots,p_k$ as red? I think it is equivalent, but I want another pair of eyes.} \rodrigo{I dont understand how it may not be compatible!}

Next we upper-bound the number of ways in which $G_i$ can be completed to a solution.

\begin{lemma}\label{lem_combsol}
Let $G_i$ be defined as above for some $1 \leq i \leq k+1$ such that $G_i\neq \emptyset$, and let $n_i$ be the number
of unclassified segments in $G_i$. Then there are at most
$n_i$ combinatorially different solutions that use the classification induced by $G_i$.
\label{lem:OneRegion}
\end{lemma}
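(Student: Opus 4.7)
The plan is to parameterize the stabbing 3-rectangles that extend $G_i$ by a sweep of their right boundary $x_R'$, and to show that each unclassified segment contributes at most one new combinatorial class.

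First, I would invoke the argument used in the proof of Lemma~\ref{lem_segtypes}, but applied to the red 3-rectangle $R_i$ obtained after cascading in $G_i$: every unclassified segment has one endpoint (the ``E endpoint'') to the right of $R_i$ and below its top edge, and its other endpoint in region $B$ or $C$ (above the top edge of $R_i$). Any extending stabbing 3-rectangle $R'$ is described by parameters $(x_L', y_T', x_R')$ with $x_L'\le x_L^i$, $y_T'\ge y_T^i$, and $x_R'\ge x_R^i$. The key observation is that the E endpoint of an unclassified segment is contained in $R'$ if and only if $x_R'\ge x_E$, independently of $x_L'$ and $y_T'$, since its $y$-coordinate is automatically at most $y_T'$ and its $x$-coordinate automatically at least $x_L'$.

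Next I would sort the $n_i$ unclassified segments by the $x$-coordinate of their E endpoint, producing a strictly increasing sequence $x_{E_1}<\cdots<x_{E_{n_i}}$. As $x_R'$ sweeps from $x_R^i$ to $+\infty$, the set of E-included endpoints forms a monotonically growing chain that changes at exactly the $n_i$ critical values $x_{E_j}$. I would then argue that within each sweep interval, at most one combinatorial class of $(x_L', y_T')$ can be realized: for a fixed set $J$ of E-included endpoints, the segments in $J$ must have their B or C endpoint excluded from $R'$, while the remaining unclassified segments must have theirs included. For CE segments this yields inequalities of the form $y_T'\ge y_C$ or $y_T'<y_C$ which together pin down a (possibly empty) interval for $y_T'$. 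For BE segments, once $y_T'$ is fixed, the analogous inequalities $x_L'\le x_B$ or $x_L'>x_B$ pin down a (possibly empty) interval for $x_L'$. Hence each $J$ yields at most one combinatorial class.

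Finally, to match the stated bound $n_i$, I would argue that one of the $n_i+1$ sweep intervals must be discarded: the extreme interval $x_R'\in[x_R^i,x_{E_1})$ (no E endpoint included) either is infeasible or produces a solution that is combinatorially equivalent to one counted at a different index $i'$ (for instance, if the cascading procedure starting from the corresponding fully-extended red region would have reclassified points in region $A$, the resulting stabber is already enumerated through another $G_{i'}$). The main obstacle is the careful geometric verification of the ``pinning down'' step, namely that the conjunction of CE and BE constraints indeed leaves a single combinatorial class inside each sweep interval, together with the bookkeeping required to eliminate one interval so as to achieve exactly the bound $n_i$ rather than $n_i+1$.
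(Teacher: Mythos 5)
Your central idea is the same as the paper's: because every segment still unclassified in $G_i$ is of type BE or CE, it has exactly one endpoint in region E, so fixing the right side of a candidate stabber (equivalently, fixing the set $J$ of E endpoints it contains) fixes the colour of every unclassified segment and hence the entire combinatorial class. That observation alone already gives ``at most one class per choice of right side''; the subsequent pinning down of $y_T'$ and $x_L'$ that you single out as the main obstacle concerns \emph{realizability} of each candidate class, which is irrelevant for an upper bound, so that whole step can be dropped.

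The genuine gap is in how you get from your $n_i+1$ sweep intervals down to the claimed bound $n_i$. The fallback you propose for the interval with $J=\emptyset$ --- that the resulting stabber ``is combinatorially equivalent to one counted at a different index $i'$'' --- cannot work: a solution associated with $G_{i'}$, $i'\neq i$, colours the region-A points $p_1,\dots,p_k$ differently, so by definition it is combinatorially different from every solution that uses the classification induced by $G_i$ and cannot absorb the extra class. The alternative (``is infeasible'') is asserted without argument, and nothing you have established forbids a stabber containing all B/C endpoints of the unclassified segments and none of their E endpoints. The paper's accounting is different: by the representation convention of Section~\ref{sub_sec_1.1}, each side of a reported stabber passes through a red endpoint of $S$; the paper then argues that the right side of any solution extending $G_i$ may be assumed to pass through an endpoint lying in region E (unclassified endpoints cannot lie in D, by the argument of Lemma~\ref{lem_segtypes}), and charges the solution to that endpoint, of which there are exactly $n_i$. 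Your $J=\emptyset$ interval is precisely the configuration that this ``without loss of generality'' absorbs (right side supported by an already-classified red point on the boundary of the red region of $G_i$); if you keep the sweep formulation you must deal with that configuration explicitly --- show it cannot yield a class beyond those charged to endpoints in E, or charge it to one of them --- rather than deferring it to another index.
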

\begin{proof}
Since all unclassified segments are of type BE or CE, each of them has one endpoint in region E.
In any solution based on the classification induced by $G_i$, the right boundary of the red region must cross somewhere within E.
Without loss of generality, assume it goes through a point in E (since there cannot be endpoints
of unclassified segments in region D).

This implies that if we fix the right side of such a solution, the classification of all
unclassified segments becomes fixed. To see this, let $p$ be the endpoint in E through which the right side of
the solution goes. Fixing $p$ implies classifying all other segments that have an endpoint in E (those
to the left of $p$ must be red, the others blue).
Since all remaining unclassified segments have an
endpoint in E, they all become classified. It follows that there is at most one combinatorially
different solution for each point in E, of which there are $n_i$.
\end{proof}

We are now ready to show that the total number of combinatorially different solutions is linear.

\begin{lemma}\label{lem_atmostn}
For any set $S$ of $n$ segments there are $O(n)$ combinatorially different stabbing 3-rectangles.
\end{lemma}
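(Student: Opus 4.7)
The plan is to count stabbers by the family $G_i$ they belong to, and then bound the total via a monotonicity/charging argument on the cascading procedure.

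First I would show that every non-trivial stabbing $3$-rectangle $\mathcal{R}$ is consistent with exactly one $G_i$. The left vertical side of $\mathcal{R}$ partitions the left-to-right sequence $p_0,p_1,\ldots,p_k,p_{k+1}$ of region~$A$ into a blue prefix and a red suffix; by Lemma~\ref{lem_cascadinit3} (combined with the fact that $p_{k+1}$ is forced red and $p_0$ forced blue by construction), this split occurs between $p_{i-1}$ and $p_i$ for a unique index $i\in\{1,\ldots,k+1\}$. Hence $\mathcal{R}$ is compatible with~$G_i$, and by Lemma~\ref{lem:OneRegion} the number of stabbers compatible with $G_i$ is at most~$n_i$. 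Therefore the total number of combinatorially different stabbers is at most $\sum_{i=1}^{k+1} n_i$, and it suffices to bound this sum by~$O(n)$.

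To bound the sum, I would exploit a monotonicity of the cascading in~$i$: when moving from $G_i$ to $G_{i+1}$, the single A-point $p_i$ flips from red to blue, so the red region can only weakly shrink (on its A-side) while the blue region can only weakly grow. The plan is then to charge each pair $(i,s)$, with $s$ unclassified in~$G_i$, either to a first-classification event (the unique $i$ at which $s$ becomes blue as $i$ increases, or symmetrically the unique $i$ at which $s$ becomes red as $i$ decreases) or to the E-endpoint of~$s$. Since each BE/CE segment undergoes at most one blue-transition and one red-transition across the sequence $G_1,\ldots,G_{k+1}$, and every endpoint of~$S$ is charged $O(1)$ times in total, the sum telescopes to $\sum_i n_i = O(n)$.

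The main obstacle is to make the monotonicity rigorous: concretely, that a segment classified blue in some $G_i$ stays blue in all $G_j$ with $j>i$, and that a segment classified red in some $G_i$ stays red in all $G_j$ with $j<i$. This requires tracking how cascading chain-reactions propagate through the five regions~$A$--$E$, and the argument should leverage the geometric fact that the color change at $p_i$ happens inside region~$A$, which is separated from the BE/CE endpoints by the red region itself---so that shrinkage of the red region on its A-side does not unclassify endpoints in $B\cup C\cup E$ on the opposite side. Once this bidirectional monotonicity of the cascading is established, the charging argument is essentially routine; the delicate part, as usual in cascading analyses, is verifying that no oscillation can occur during the chain reaction.
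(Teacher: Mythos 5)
Your high-level decomposition is the same as the paper's: attribute each non-trivial stabber to the unique $G_i$ determined by its left side, invoke Lemma~\ref{lem:OneRegion} to get at most $n_i$ solutions per $G_i$, and reduce everything to bounding $\sum_i n_i$. The gap is in how you bound that sum. Even if the monotonicity you describe were proved (blue stays blue as $i$ increases, red stays red as $i$ decreases), it would only show that the set of indices $i$ for which a fixed segment $s$ is unclassified in $G_i$ forms a contiguous interval; it says nothing about the length of that interval. A segment whose classification is never forced during the sweep of region~A could, for all your argument shows, be unclassified in $\Theta(k)$ of the regions $G_i$, and in your charging scheme every such pair $(i,s)$ with no adjacent transition gets charged to the E-endpoint of $s$. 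The assertion that ``every endpoint of $S$ is charged $O(1)$ times in total'' is then not a consequence of anything you have set up --- it is essentially a restatement of $\sum_i n_i = O(n)$, i.e.\ of the lemma itself. (Transition events account for at most two indices per segment; the entire difficulty lies in the non-transition indices.) Moreover, the monotonicity itself is delicate: when $p_i$ flips from red to blue, its partner endpoint $q_i$ flips from blue to red, so the red region shrinks on the A side but can grow on the C/E side, and cascaded classifications of other segments can flip orientation as well.

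The paper closes exactly this gap with a stronger, geometric claim: a segment can be unclassified in at most one $G_i$, whence $\sum_i n_i \le n$ directly. The argument splits by type (Lemma~\ref{lem_segtypes}). If a type-BE segment were unclassified in both $G_i$ and $G_j$ with $i<j$, its two endpoints would have to lie in the disjoint $x$-coordinate gaps of region~A associated with steps $i$ and $j$, so both endpoints would be to the left of the red region, contradicting Lemma~\ref{lem_segtypes}. If a type-CE segment were unclassified in both, one tracks the segment with endpoint $p_i$: its other endpoint $q_i$ (lying in C, D or E) is blue in $G_i$ but red in $G_j$, which forces either the top or the right gray interval of $G_j$ to be disjoint from the corresponding interval of $G_i$, so no CE segment can have an endpoint in both. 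Some argument of this kind --- exploiting that the flip at $p_i$ propagates across the red region and displaces the gray intervals on the C/E side --- is what your plan is missing; without it the charging does not telescope, and the bound you obtain is only the trivial $O(n^2)$.
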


\begin{proof}
It suffices to show that an unclassified segment in region $G_i$ cannot be unclassified in $G_j$ for $i\neq j$. % from those in $G_j$, for any $1 \leq i < j \leq k+1$.
This fact, combined with Lemma~\ref{lem:OneRegion}, implies that the total number of possible solutions is $O(n)$.
%
%For simplicity, we prove it for $i$ and $j=i+1, i<k$.
%Recall that $G_i$ is the (gray) region obtained after classifying points $p_1,\dots,p_{i-1}$ as blue, points $p_i,\dots,p_k$ as red, and performing a cascading procedure.
%
We can see $G_i$ as defined by three intervals, one on the left (i.e., region A), one on top (region C) and one on the right (region E). Actually, regions B and D can be more complex in $G_i$ (the upper left and right boundaries may form a increasing and decreasing staircase shape, respectively), but we ignore this fact for the proof.

The left interval is initially defined by the gap in $x$-coordinates between $p_{i-1}$ and $p_i$, although the posterior cascading procedure can limit it further. In particular, an unclassified segment in $G_i$ must have an endpoint whose $x$-coordinate lies between those of $p_i$ and $p_{i+1}$.

Assume for the sake of contradiction, that there exist two indices $i,j$ such that $1\leq i<j\leq k+1$ and a segment $s\in S$ such that $s$ is unclassified in both $G_i$ and $G_j$. We distinguish two cases depending on whether $s$ is of type BE or CE.

First consider the (simpler) case in which $s$ is of type BE. From the above reasoning we conclude that $s$ has an endpoint whose $x$-coordinate lies between those of $p_i$ and $p_{i+1}$, and another endpoint between $p_j$ and $p_{j+1}$. These intervals are disjoint whenever $i\neq j$. In particular, this implies that both endpoints of $s$ are to the left of the red region, which would contradict Lemma~\ref{lem_segtypes}.
%Next we argue that no unclassified segment can be in both $G_i$ and $G_{i+1}$.

%The main difference between $G_i$ and $G_{i+1}$ lies on $p_i$, which goes from being red in $G_i$ to being blue in $G_{i+1}$. This directly implies that the left intervals of $G_i$ and $G_{i+1}$ are disjoint: one goes (at most) between the $x$-coordinates of $p_{i-1}$ and $p_i$, while the other spans at most the points with $x$-coordinate between those of $p_i$ and $p_{i+1}$. It follows that no segment of type BE can be in both $G_i$ and $G_{i+1}$, since such a segment should have its left endpoint in  both left intervals at the same time.

It remains to consider the case in which $s$ is of type CE. We claim that either the top interval or the right interval of $G_i$ is disjoint from the corresponding interval of $G_{j}$.
Let $\{p_i,q_i\}$ be the segment with endpoint at $p_i$. Recall that this segment can be of type AC, AD or AE. Since we assumed that $i<j$, we know that $p_i$ is red, and $q_i$ is blue in $G_i$, whereas the opposite assignment occurs in $G_j$. This implies that the region where $q_i$ is (C, D or E), changes its gray interval to a new interval that must be disjoint from the previous one. Namely, if $q_i$ lies on C or D, then the red region changes from ending strictly below $q_i$ to at least containing $q_i$. On the other hand, if $q_i$ lies on E, the red region changes from ending strictly to the left of $q_i$ to containing it. Thus, either the top or right intervals of $G_i$ and $G_{j}$ are disjoint, implying no segment of type CE can be in both.
%
%Moreover, notice that $(p_i,q_i)$ will not be further reclassified. In particular, the next regions $G_{i+2}, \ldots, G_{k+1}$
%when further increasing $i+1$ We have proved that no unclassified segment can be in both $G_i$ and $G_{i+1}$.
%The same arguments imply that no unclassified segment can be in $G_i$ and $G_j$, for $i \neq j$.
%The lemma follows.
\end{proof}

\subsection{Algorithm}

The previous results give rise to a natural algorithm to generate all combinatorially different stabbing 3-rectangles.
The algorithm has two phases:

\begin{description}
\item[Initial cascading]
%Since any stabbing halfplane or quadrant can be transformed
%into a stabbing $3$-rectangle, we can assume that no such stabber exists.
We initialize the red region using Lemma~\ref{lem_cascadinit3} and launch the usual cascading procedure. If this cascading finishes without finding a contradiction,
we obtain a red and a blue region that must be included and avoided by any 3-rectangle stabber, respectively.

\item[Plane sweep of region A] Now we sweep the points in region A from left to right. In the $i$th step of the
sweep, we classify points $p_1,\dots,p_{i-1}$ as blue, and points $p_i,\dots,p_k$ as red. After each such step, we
perform a cascading procedure. If the cascading gives no contradiction, we are left with a gray region
$G_i$ and a number of unclassified segments that must be of type BE or CE.
Then we sweep the endpoints of the unclassified segments in region E from left to right (we call this the {\em secondary} sweep). At each step
of the sweep, we fix those to the left of the sweep line as red, and those to the right of the sweep line as
blue, and perform a third cascading procedure. From the proof of Lemma~\ref{lem:OneRegion}, we know
that each step of this second sweep, after the corresponding cascading procedure, can produce at most one different solution.
\end{description}

%This algorithm leads to the following result.

\begin{theorem}\label{theo_bottomless}
All combinatorially different stabbing 3-rectangles for a
set $S$ of $n$ segments can be computed in $O(n \log n)$ time.
\end{theorem}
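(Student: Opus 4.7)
My plan is to establish correctness first, and then focus the bulk of the argument on showing that the running time is $O(n \log n)$ via a combination of careful data-structure choices and amortization arguments.

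For correctness, I would argue that the algorithm exhaustively enumerates all combinatorially different solutions. By Lemma~\ref{lem_cascadinit3}, the initial red region is a subset of any non-trivial stabber, so the initial cascading correctly identifies all forced classifications. After the initial cascading, by Lemma~\ref{lem_segtypes}, any unclassified segment has one endpoint in a small number of allowed regions. The primary sweep enumerates every combinatorially distinct way of classifying the points $p_1,\dots,p_k$ in region~A (any stabber must split them into a blue prefix and a red suffix because the right boundary of the stabber is a vertical line through region A or further right). For each such split, Lemma~\ref{lem_combsol} shows that the remaining freedom lies entirely in region~E, and the secondary sweep produces each of the at most $n_i$ resulting stabbers exactly once. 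Trivial solutions can be detected and filtered as described in the proof of Theorem~\ref{theo_enumstrip}.

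For the initial cascading, I would reuse the machinery from Theorem~\ref{theo_allquadrants}: maintain the blue staircase via a balanced search tree (sorted simultaneously by $x$- and $y$-coordinate of staircase vertices) so that each new classification inserts a vertex in $O(\log n)$ time, and use a priority search tree to answer the 3-sided range queries needed to find the next segment whose endpoint lies in the current red or blue region. Each endpoint is inserted into and deleted from the query structure at most once, giving a bound of $O(n \log n)$ for the initial cascading phase.

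For the sweep phases, my plan rests on two amortization arguments. First, during the primary sweep, step $i \to i{+}1$ only recolors a single point $p_i$ from red to blue; I would maintain the red and blue regions incrementally, so that the cost of step $i$ is proportional to the number of segments newly classified during that step, each charged $O(\log n)$ for the range-search update. Because the shrinking of the left boundary of the red region is monotone in $i$, and the corresponding growth of the blue staircase on the left is monotone as well, I would show that no segment is ever reclassified by the primary sweep in the opposite direction; hence the total work over all primary sweep steps is $O(n \log n)$. Second, the total work of all secondary sweeps is bounded by $\sum_i (n_i \log n)$; the proof of Lemma~\ref{lem_atmostn} shows that each unclassified segment belongs to at most one $G_i$, so $\sum_i n_i = O(n)$, yielding $O(n \log n)$ total.

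The main obstacle I anticipate is verifying that the incremental updates during the primary and secondary sweeps really can be performed in $O(\log n)$ per recoloring event, because the blue region has staircase complexity and the cascading can, in principle, trigger nonlocal effects when the red 3-rectangle contracts. To handle this, I would store the staircase in a balanced tree keyed by $x$-coordinate supporting successor queries, and observe that when a single point turns blue, the new staircase vertex and any staircase vertices it dominates can be located and removed in $O(\log n)$ amortized time per vertex, charged against the one-time insertion of that vertex. Combined with the monotonicity of the primary sweep and the disjointness property from Lemma~\ref{lem_atmostn}, this yields the overall $O(n \log n)$ bound.
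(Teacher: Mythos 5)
Your algorithm and its overall structure coincide with the paper's: initial cascading from the red region given by Lemma~\ref{lem_cascadinit3}, a primary sweep over the points of region~A, and secondary sweeps over region~E whose total size is bounded via the disjointness argument in Lemma~\ref{lem_atmostn}. The correctness sketch and the accounting of the initial cascading and of the secondary sweeps essentially match the paper's proof (aside from a small slip: it is the \emph{left} side of the stabber that sweeps through region~A, not the right one).

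The genuine gap is in the analysis of the primary sweep. Your bound rests on the claim that, because the left boundary of the red region and the blue constraint on the left evolve monotonically, ``no segment is ever reclassified by the primary sweep in the opposite direction.'' This is false: at the step from $i$ to $i+1$ the point $p_i$ itself flips from red to blue, so its partner endpoint flips from blue to red, and every coloring that was obtained by cascading from ``$p_i$ is red'' becomes invalid and must be undone (for instance, endpoints in region~E that were forced blue because containing them would force containing the partner of $p_i$ are no longer forced once that partner turns red, and the red region may now grow to the right rather than only shrink on the left). Consequently the per-step cost is not bounded by the number of \emph{newly} classified segments, and without an undo mechanism a single step can trigger up to $\Theta(n)$ recolorings, which only yields an $O(n^2\log n)$ bound. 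The paper closes exactly this hole by precomputing, for every endpoint $p$, the largest index $j$ such that $p_j$ being red forces a coloring of $p$; with this information the colorings caused by $p_i$ are reverted when the sweep moves past step $i$, in $O(n)$ total time after $O(n\log n)$ preprocessing. Your proposal needs this (or an equivalent) reversal mechanism before the claimed $O(n\log n)$ bound for the primary sweep is justified.
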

\begin{proof}
First we note that there is an $O(n\log n)$ time preprocessing cost for obtaining the
different orders of the endpoints of the segments by increasing and
decreasing of their $x$ and $y$ coordinates.

Next we analyze the time needed in the initial cascading process. Each time an endpoint is colored, we can check in at most $O(\log n)$ time if it creates a contradiction: this is done similarly to the previous section, where we maintain the intervals in regions A, C and E, and the staircases in regions B and D in dynamic data structures that allow for logarithmic-time updates. Thus, the total time needed to check whether there is an initial solution is $O(n\log n)$. Using the same structure, we can detect contradictions and update regions when a point is recolored (in either the sweep of region A or a secondary sweep).

In addition, after each step of the sweep in region A we must revert certain colorings.
Recall that in the $i$th step we classify points $p_1,\dots,p_{i-1}$  as blue and points $p_i,\dots,p_k$ as red.
After the i$th$ step, we need to undo the colorings that are a consequence of setting $p_i$ red, since from the next step on, $p_i$ will be blue.
This can be done efficiently by precomputing, for each point $p$, the largest $j$ such that a $p_j$ in region A will force a coloring of $p$.
In this way, the total time spent reverting colorings, after preprocessing, is $O(n)$.
The extra preprocessing time needed to precompute this information is $O(n \log n)$.

Overall, the sweep process of region A (disregarding the secondary sweeps) needs $O(n \log n)$ time, since at most $O(\log n)$ time is spent per each segment. The secondary sweep of $n_i$ points in region E, including all the cascading procedure, can be done in $O(n_i \log n_i)$ time. Since the groups of points swept in region E in each step are disjoint, the total cost of all cascadings in secondary sweeps is also bounded by $O(n \log n)$.
%
%Finally, for each stabbing 3-rectangle that we find, we must determine whether or not it is non-trivial (i.e., if it can be extended to become a stabbing quadrant or strip). This can be verified with three orthogonal range queries as in the proof of Theorem~\ref{theo_allquadrants}. Each of these queries needs $O(\log n)$ time. By Lemma~\ref{lem_atmostn}, we know that at most $O(n)$ queries will be generated, and thus the theorem is shown.
\end{proof}

\section{Stabbing rectangles}\label{sec_4}

In this section we consider the computation of rectangular stabbing regions.
Even though we would like to use a cascading-based approach, like the ones used in previous sections, it is not clear how this can be done.
Indeed, a fundamental property for the cascading procedure to work is that we can find some region of the plane that must be contained in all non-trivial stabbers (see Observation~\ref{obs_init}, Observation~\ref{obs_initquad}, and Lemma~\ref{lem_cascadinit3}).
Unfortunately, it seems difficult to use a similar approach for rectangles because it is not clear how to apply a cascading procedure efficiently. 
Indeed, an important difference with the previous cases is that the $x$ and $y$ coordinates can behave independently, and thus it is hard to track the relationship between the two dimensions with a cascading procedure 
 (see Figure~\ref{fig:QuadraticRectangles}).

Instead, we use a different approach to compute all stabbing rectangles. Any inclusionwise smallest stabbing
rectangle $R$ must contain one endpoint
on each side, and in particular, an endpoint $v$ of $S$ must be on
its lower boundary segment (otherwise, we could shrink it further).

The key observation is that if we fix $v$ (or equivalently, the lower side of a candidate stabbing rectangle), then we can reduce the problem to that of finding a $3$-sided rectangle.
In particular, by fixing the lower side of the rectangle we are forcing all points below it to be blue (and as a result the corresponding endpoint of the segment must be red).
After a successful cascading procedure, we end up with a certain initial classification that can be completed to a solution to the rectangle if and only if a compatible stabbing $3$-sided rectangle exists. Since there are $2n$ candidates for $v$ (each of the endpoints of the segments of $S$), we have $O(n)$ different instances that can be solved independently using Theorem~\ref{theo_bottomless}.
Therefore we obtain the following result.

\begin{theorem}\label{theo_rect}
All the combinatorially different stabbing rectangles for a
set $S$ of $n$ segments can be computed in $O(n^2 \log n)$ time.
\end{theorem}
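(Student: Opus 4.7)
The plan is to reduce the problem to $O(n)$ instances of the $3$-sided rectangle stabbing problem and invoke Theorem~\ref{theo_bottomless}. First I would observe that, as noted before the statement, any inclusionwise minimal stabbing rectangle $\mathcal{R}$ must contain an endpoint of $S$ on each of its four sides; otherwise one of the sides could be pushed inward while keeping the classification of $S$ unchanged. In particular, some endpoint $v\in S$ lies on the bottom side of $\mathcal{R}$. This will be the anchor for the enumeration: the $y$-coordinate of the bottom side of $\mathcal{R}$ equals $y(v)$ for some endpoint $v$ of a segment in $S$.

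Next, for each of the $2n$ candidates $v$, I would build an instance of the $3$-sided rectangle problem as follows. Fix the horizontal line $L_v: y=y(v)$. Every endpoint $p$ with $y(p)<y(v)$ must lie outside any stabbing rectangle whose bottom side is $L_v$, hence must be classified as blue; the partner endpoint of the corresponding segment must then be classified as red (lying on or above $L_v$). Starting from this initial classification, I would run the cascading procedure of Section~\ref{sec_2}. Two outcomes are possible: either a contradiction is found, in which case there is no stabbing rectangle whose bottom side is $L_v$; or the cascading terminates successfully, and what remains is exactly the problem of choosing the left, top and right sides of a rectangle lying above $L_v$ that correctly classifies the still-unclassified segments. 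This is precisely a $3$-sided (bottomless) rectangle stabbing problem on the reduced instance, which Theorem~\ref{theo_bottomless} solves in $O(n\log n)$ time and reports all combinatorially different solutions.

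For the correctness of the enumeration, I would argue that every combinatorially distinct stabbing rectangle $\mathcal{R}$ for $S$ is reported: shrink $\mathcal{R}$ to an inclusionwise minimal combinatorially equivalent rectangle $\mathcal{R}'$; then $\mathcal{R}'$ has some endpoint $v$ of $S$ on its bottom side, so $\mathcal{R}'$ (and hence $\mathcal{R}$) is produced by the $v$-th iteration. To avoid reporting the same combinatorial solution more than once (it may be produced by several choices of $v$, or coincide with a stabbing $3$-rectangle with empty lower region), I would apply the same triviality/duplicate check described in the proof of Theorem~\ref{theo_enumstrip}, using an orthogonal range searching structure to verify in $O(\log n)$ time per candidate whether the stabber is new (e.g.\ whether the strip swept by pushing the bottom side downward to the next endpoint contains any point of $S$). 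Since the total number of reported rectangles over all $2n$ iterations is $O(n^2)$, this adds only $O(n^2\log n)$.

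Finally, I would bound the total running time: $O(n)$ candidates $v$, each giving one initial cascading plus one application of Theorem~\ref{theo_bottomless}, both costing $O(n\log n)$. The main obstacle I anticipate is precisely the bookkeeping for duplicates and trivial stabbers across iterations; I expect the range-searching check referenced above to suffice, but one must also be careful that fixing $y(v)$ as the bottom does not accidentally discard a valid rectangle whose bottom side passes through $v$ but actually has its lowest red endpoint strictly above $L_v$ (this case is handled by the iteration corresponding to that other endpoint). Summing the costs yields the claimed $O(n^2\log n)$ bound.
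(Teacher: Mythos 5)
Your proposal is correct and follows essentially the same route as the paper: enumerate the $2n$ candidate endpoints $v$ for the bottom side, force all points below the line through $v$ to be blue, cascade, and reduce each instance to the $3$-sided rectangle problem solved by Theorem~\ref{theo_bottomless}, with an $O(\log n)$ orthogonal range-search check per reported rectangle to handle triviality/duplicates. The only cosmetic difference is that the paper amortizes the inter-call cascading cost to $O(n\log n)$ overall, whereas you charge it per iteration; either way the bottleneck is the $O(n)$ invocations of Theorem~\ref{theo_bottomless}, giving $O(n^2\log n)$.
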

\newcommand{\pftheorect}{
\begin{proof}
The bottleneck of the algorithm is the $O(n)$ times that we invoke the algorithm of Theorem~\ref{theo_bottomless}. Between two such calls, we  execute a cascading procedure. As in previous cases, a segment can only be classified once in this manner, thus this cascading will need overall $O(n\log n)$ time.

Each time we find a stabbing rectangle, we must make sure it is non-trivial (i.e., determine if it can be extended to a stabbing 3-rectangle). This is done in $O(\log n)$ time with a constant number of orthogonal range search queries (as described in the proofs of Theorems~\ref{theo_allquadrants} and~\ref{theo_bottomless}). Also note that the linear bound on the number of possible 3-rectangles (Lemma~\ref{lem_atmostn}) immediately implies a quadratic bound on the number of possible stabbing rectangles. In fact, it is not hard to build a set of segments with $\Theta(n^2)$ combinatorially different stabbing rectangles (such an example is shown in Figure~\ref{fig:QuadraticRectangles}).
\end{proof}
}
\ShoLong{}{\pftheorect}
%Figure~\ref{fig:QuadraticRectangles} depicts the construction of a set of segments that has $\Theta(n^2)$ combinatorially different stabbing rectangles, assuming that  each group of diagonal segments has size $\Theta(n)$.

\begin{figure}[tb]
\begin{center}
\includegraphics{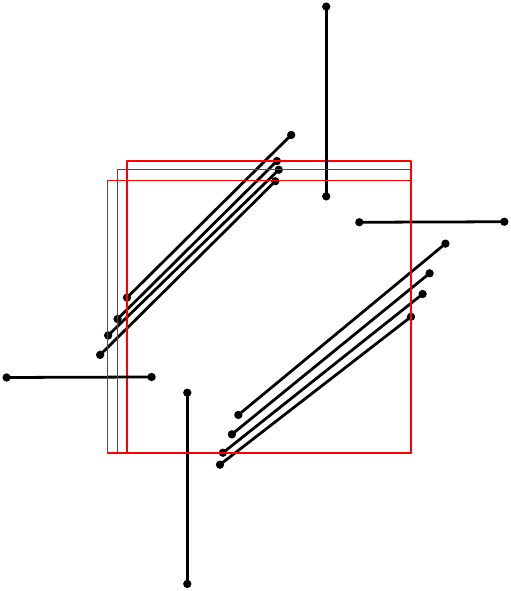}
\end{center}
\caption{A set of segments that has $\Theta(n^2)$ combinatorially different stabbing rectangles.
In the example, the choice of the bottom-right corner is independent of the choice the top-left corner, leading to a quadratic number of rectangular stabbers.}
\label{fig:QuadraticRectangles}
\end{figure}

\section{Conclusions and open problems}\label{sec_concl}

In this paper we introduced a general algorithm for computing all combinatorially different stabbers of a set of segments, which
classifies the endpoints based on partitioning the
plane into red, blue, and gray regions that are updated as points
become classified.
We showed how to apply such strategy
to horizontal strips, quadrants, $3$-sided rectangles
and, indirectly, to axis-aligned rectangular stabbers.
%A natural extension is to consider slightly more general shapes such as staircases, orthogonally convex shapes, or the natural extensions of these shapes to higher dimensions.
Furthermore, we have proved that our algorithms for horizontal strips, quadrants, and $3$-sided rectangles are asymptotically optimal (under a mild assumption on the description of the stabbing region), while the algorithm for rectangles has only a logarithmic factor overhead.

\begin{figure}[tb]
\begin{center}
\includegraphics{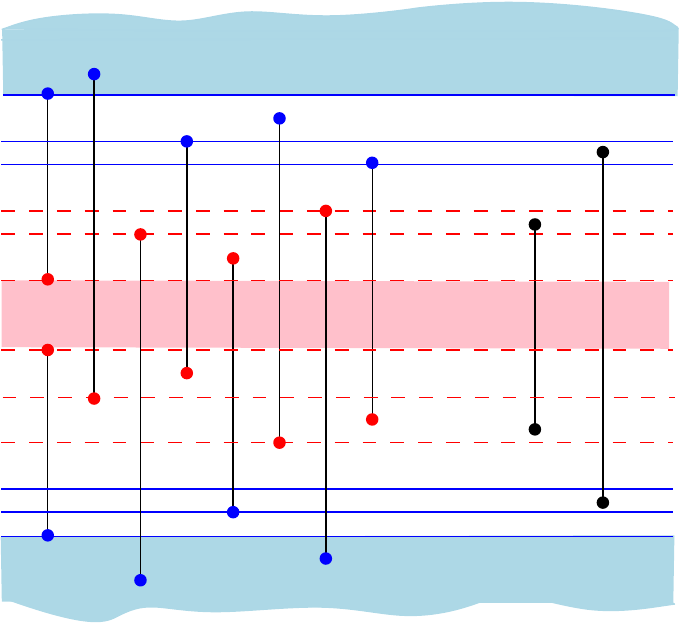}
\end{center}
\caption{A set of segments that, when processed from left to right, creates a series of growing nested red and blue regions. The existence of a solution depends on whether at some point the growing red or blue regions contain both endpoints of some segment, like the two rightmost ones.}
\label{fig:encadena}
\end{figure}

This work gives rise to several directions for future work.
One of them is studying the decision version of the problems.
That is, what if one only needs to know if one stabber of a certain shape exists?
It is an intriguing open problem whether one can solve the decision versions of the problems faster, even for the case of horizontal strips.
This was shown not to be possible for line stabbers~\cite{ARW,emprww}, but it is unclear if this is also the case in our context.
In our setting, an important difficulty for devising a more efficient algorithm for the decision version is that there does not seem to exist any local condition that determines whether a solution exists.
As shown in Figure~\ref{fig:encadena}, it may be necessary to analyze a linear number of segments (with the corresponding cascading iterations) to find out that a subset of them forces a stabbing region that is not compatible with some other segment.
It is unclear whether some different approach could yield a $o(n \log n)$ algorithm for this problem.

Other natural algorithmic extensions include designing an output-sensitive algorithm to report all combinatorially different stabbing rectangles, with running time proportional to the number of solutions, and studying the optimization variants of the problems for the cases in which no stabber exists. For those cases, it would be useful to have an efficient algorithm to find a stabber that classifies as many segments as possible.

Finally, another relevant family of problems arises from looking for stabbers with arbitrary orientations (instead of rectilinear).
For the case of lines, going from fixed-orientation lines to any line increases the running time from $O(n)$ to $\Omega(n \log n)$ time~\cite{emprww}. In our setting, there are only $O(n^2)$ relevant directions, hence our algorithms could be used for this variants if applied to each of the relevant directions, yielding an $O(n^2)$ multiplicative increase in the running time.
Designing more efficient algorithms for this case is clearly an interesting open problem.

%
%\rodrigo{Comentario Revisor FCT 1: B. I would like to see a discussion on finding stabbing regions with arbitrary orientations.
%Does the complexity blow up by $O(n^2)$?
%}

\paragraph{Acknowledgments.} {\small
M. C., C. S., and R.I. S were partially supported by projects MINECO MTM2015-63791-R/FEDER and Gen. Cat. DGR2014SGR46. D. G. was
supported by projects PAI FQM-0164 and MTM2014-60127-P. 
 M.~K.~was supported in part by the ELC project (MEXT KAKENHI No.~12H00855 and 15H02665).
 R.I.~S. was also partially supported by MINECO through the Ram{\'o}n y Cajal program.}

\end{document}